\tikzstyle{densely dashed}=[dash pattern=on 4pt off 2pt]
\newcommand{\field}[1]{\mathbb{#1}}
\newcommand{\R}{\field{R}}
\newcommand{\N}{\field{N}}
\newcommand{\Z}{\field{Z}}
\newcommand{\E}{\field{E}}
\newcommand{\pp}{\mathcal{P}}
\newcommand{\FF}{\mathcal{F}}
\newcommand{\bs}{\boldsymbol}
\def\argmin{\mathop{\mbox{argmin}}}
\theoremstyle{Conjecture}
\theoremstyle{example}
\theoremstyle{remark}
\theoremstyle{lemma}
\theoremstyle{definition}
\theoremstyle{corol}
\theoremstyle{proposition}
\theoremstyle{condition}
\newtheorem{theorem}{Theorem}[section]
\newtheorem{example}{Example}[section]
\newtheorem{remark}{Remark}[section]
\newtheorem{proposition}{Proposition}[section]
\newtheorem{corollary}{Corollary}[section]
\def\lf{\lfloor}
\def\rf{\rfloor}
\begin{document}
  \title{Prediction in locally stationary time series}
\author{\small Holger Dette \\
\small Ruhr-Universit\"at Bochum \\
\small Fakult\"at f\"ur Mathematik \\
\small 44780 Bochum \\
\small Germany \\
\and
\small Weich Wu \\
\small  Tsinghua University\\
\small  Center for Statistics\\
\small  Department of Industrial Engineering\\
\small 10084 Beijing China
}

\maketitle
\begin{abstract}
We develop an estimator for the high-dimensional covariance matrix  of a locally stationary process
with a  smoothly varying   trend and use this statistic to derive consistent predictors
in non-stationary time series. In contrast to the currently available methods for this problem the predictor
developed here does not rely on fitting an autoregressive model and  does not require a vanishing trend.  The finite sample properties of the new methodology are illustrated by means of a simulation study and a financial indices study.
\end{abstract}

AMS subject classification:  62M10; 62M20   

Keywords and phrases: locally stationary time series, high dimensional auto-covariance, matrices,
prediction, local linear regression, 


\section{Introduction}
\label{sec1}
\def\theequation{1.\arabic{equation}}
\setcounter{equation}{0}
	
An important problem in time series analysis is to predict or forecast  future observations  from  a given a stretch of data, say $X_{1}, \ldots , X_{n}$, and numerous authors
have worked on this problem.   Meanwhile
there is a  well developed theory for  prediction under the assumption
of stationary processes
[see for example \cite{brockwell2002introduction}, \cite{bickel2011banded}, \cite{mcmurry2015high} among  many  others]. On the other hand,    if  data is obtained over a long stretch of time it may be unrealistic to assume that the stochastic structure of  a time series  is stable. Moreover, in many shorter time series non-stationarity can also be observed and prediction under the assumption of stationarity  might be misleading.

A common approach to
deal with this problem of   non-stationarity is to assume
a location scale model with a  smoothly changing  trend  and variance but a stationary error process, say $X_n = \mu (n) + \sigma (n) \varepsilon_n$ [see, for example, \cite{van2004forecasting}, \cite{stuaricua2005nonstationarities}, \cite{zhao2008confidence},  \cite{guillaumin2017analysis}, \cite{das2017predictive}].   In this case  the  trend and variance function can be estimated and prediction can be
performed applying methods for stationary data to the standardized  residuals.
However, there  appear also more sophisticated
features of non-stationarity  in the data, which are not captured by a  a simple  location scale model, such as time-changing kurtosis or skewness,
and the standardized  residuals obtained by this procedure  may not be stationary.

To address this type of     non-stationarity various mathematical concepts  modeling a  slowly-changing stochastic structure have been developed in the literature [see for example,  \cite{priestley1988non}, \cite{dahlhaus1997fitting}, \cite{nason2000wavelet}, \cite{zhou2009local} or \cite{VO2012}]. The corresponding stochastic processes
are usually called {\it locally stationary}
and the problem of predicting future observations in these models is  a very challenging one.
An early reference is \cite{fryzlewicz2003forecasting} who  considered centered {\it  locally stationary wavelet processes}.
In this model the sample covariance matrix
in  the  prediction equation is not estimable and   the authors proposed an approximation
 using  the (uniquely defined) wavelet spectrum.
  \cite{van2004forecasting} considered the prediction problem
  in a location scale model with  a smoothly changing variance and stationary error process.
 More recent work on forecasting in centered locally stationary time series can be found in
 \cite{roueff2018prediction} and  \cite{kley2019predictive}.
The first named authors  investigated a predictor based on  auto-regression   of a given order,
while  \cite{kley2019predictive} considered  predictors in  stationary and locally stationary models for (possibly) non-stationary data and selected the ``better''  prediction among the two estimates.
A common feature of  most of these methods is that they are all based on auto-regressive fitting.

In the present paper we contribute to this literature and propose an alternative  method for prediction in
physically dependent
locally stationary  times series, which does not rely on auto-regressive fitting and is therefore more flexible. To be precise
we consider the model
\begin{align} \label{1.1}
X_{i,n}=\mu(i/n)+\epsilon_{i,n},~~i=1, \ldots , n
\end{align}
where $\mu$ is a deterministic and smooth mean or trend function on the interval $[0,1]$ and $\{ \epsilon_{i,n} : i =1, \ldots , n \}_{n\in \N} $ is a triangular array
modelled by a  locally stationary process in the sense of \cite{zhou2009local} - see Section \ref{sec2} for mathematical details.
We then estimate the regression function $\mu$ by local linear smoothing and define a banded
estimator for the corresponding   auto-covariance matrix
 \begin{align}
 \label{1.2}\Sigma_{n}= \big
 \{{\rm Cov} (X_{i,n},X_{j,n} ) \big \}_{1\leq i,j\leq n}
 \end{align}
 from the residuals of the nonparametric fit, where the width of the band increases with the sample size.
 Banded estimates of auto-covariance matrices have been considered by \cite{wu2009banding} and  \cite{mcmurry2010banded}  for {\bf centered}  and {\bf stationary}
processes using the fact that in this case  the matrix
$\Sigma_{n}$ in \eqref{1.2} is a Toeplitz matrix.
Neither of these results is applicable under the assumption of non-stationarity (even if the locally stationary process $\{X_{i,n}\}_{i=1,\ldots,n}$ in \eqref{1.1} is centered).

In Section \ref{sec3} we establish consistency (with respect to  the operator norm) of the new covariance operator for locally stationary processes with a time varying mean function.
These results are then used in Section \ref{sec4} to develop new prediction methods, which  - in contrast to the currently available literature -  do not use
autoregressive fitting.  In Section \ref{sec5} we investigate  the finite sample properties of the estimator of the covariance matrix and
compare the new predictor with the currently available methodology.  Finally, all proofs of our main theoretical results and technical details can be found in   Section \ref{sec6}.

\section{Locally stationary processes}
\label{sec2}
\def\theequation{2.\arabic{equation}}
\setcounter{equation}{0}

Consider the time series model \eqref{1.1} where $\{  \epsilon_{i,n}: i=1,\ldots,n\}_{n \in \mathbb{N}}$ is an array of centered random variables, and $\mu:[0,1]\rightarrow \mathbb R$ is a smooth mean function.  More precisely we assume
\begin{description}
	\item (M1)  The function $\mu$
	in model \eqref{1.1} has a  Lipschitz continuous second order derivative    on the interval $[0,1]$.
\end{description}

 In order to model a local stationary error  process
  we use a concept introduced by  \cite{zhou2009local}.
To be precise, define for an $L^q$-integrable random variable $X$ its norm by  $\| X \|_q=(\mathbb{E} [|X|^q])^{1/q} (q \geq 1)$,   let
$\{\varepsilon_i : i \in \mathbb{Z}  \}$ denote a sequence of independent identically distributed observations and define
  $\FF_i = ( \ldots ,\varepsilon_{i-2},\varepsilon_{i-1},\varepsilon_i )$.
We assume that there exists  a function
 $G : [0,1]\times \mathbb R^\N \rightarrow \mathbb R$   such that
  \begin{align} \label{hol7}
  \epsilon_{i,n}=G(i/n, \FF_i)
  \end{align}
is a well defined random variable.   For arbitrary functions $G$ it is not guaranteed that the stochastic structure  of $\{\epsilon_{i,n} \colon i \in \mathbb Z\} $ varies smoothly, but we can achieve this
by the following assumptions.
\begin{description}	\item(L1)
For some $q\geq 2$ we have that
$$
\sup_{t\in[0,1]}\|G(t,\FF_0)\|_q<\infty.
$$
	\item(L2) The function $G$ is differentiable with respect to the first coordinate and there exists a constant $M>0$ such that for all
 $t,s\in[0,1]$
 $$
 \Big \|\frac{\partial}{\partial t}G(t,\FF_0)-\frac{\partial}{\partial t}G(s,\FF_0) \Big \|_2\leq M | t-s | .
 $$
\end{description}
Next we quantify  the dependence structure. For this purpose  let $\{\varepsilon_i' : i \in \mathbb{Z} \}   $ denote  an independent  copy of $\{\varepsilon_i : i \in \mathbb{Z}\}$,
define $\FF_i^* =(\ldots ,\varepsilon_{-2},\varepsilon_{-1},\varepsilon_0',\varepsilon_{1}, \ldots ,\varepsilon_i)$
and
	 \begin{align*}
	\delta_q(G,i)=\sup_{t\in [0,1]}\|G(t,\FF_i)-G(t,\FF^*_i)\|_q
	\end{align*}
	as a measure of dependence. We assume for  the  same $q\geq 2$ as in assumption (L1) that
	
\begin{description}	
	\item(L3)  
	There exists a constant $\chi\in(0,1)$ such that
	$$
\delta_q(G,i)=O(\chi^{i}).
	$$
\end{description}
\begin{example} \label{ex1} {\rm
A prominent example of   this non-stationary model  is a locally stationary $AR(p$) process
where the filter in \eqref{hol7} is defined by
   	\begin{align} \label{LocAR}
   	G(t,\FF_i)=\sum_{s=1}^pa_s(t)G(t,\FF_{i-s})+{\sigma}(t)\varepsilon_{i}
   	\end{align}
where $(\varepsilon_i)_{i\in \Z}$ is a sequence of independent identically distributed centered  random
variables with $\|\varepsilon_1\|_q<\infty$, and
 $a_1 , \ldots , a_{p}, \sigma: [0,1] \to \R$,  are    for smooth functions
 such that for some $\delta_0>1$ the polynomial 
  $1-\sum_{s=1}^pa_s(t)z^s$  has  {no} roots in the disc  $\{ z \in \mathbb{C} \colon |z|\leq \delta_0 \} $.   If  the functions $a$ and  {$\sigma$} have bounded derivatives,
  $G(t,\FF_i)$ has a MA representation of the form  $G(t,\FF_i)={\sigma}(t)\sum_{j=0}^\infty c_j(t)\epsilon_{i-j}$, where $c_1, c_2 ,  \ldots  $ are  smooth functions  with derivatives satisfying $|c'_j(t)|\leq M\chi^j$ for $j\geq 0$. Therefore  assumptions  (L1)-(L3) hold for model \eqref{LocAR}.
 It has been shown in \cite{zhou2013inference} that Model \eqref{LocAR} can approximate the time-varying $AR(p)$ model in \cite{dahlhaus1997fitting}.
  }
\end{example}

\begin{remark} \label{stationary} 
{\rm 
Note that the  definition of a locally stationary error process contains the case that
 each row of
 $\{\epsilon_{i,n} \colon i \in \mathbb Z\}_{n\in \N}  $
 is stationary, that  is $G(t, \FF_i ) = H(\FF_i) $ for some function $H : \mathbb R^\N \to  \R$. In this case the random variables
$
 \epsilon_{i,n}=H(\FF_i)
$
 do not depend on $n$, Assumption (L2) is obviously satisfied and  Assumption  (L1) and (L3) reduce to
 	\begin{description}
  \item(S1)
  For some $q\geq 2$,
$\|H(\FF_0)\|_q<\infty$.
  	\item(S2)
  	There exists a constant $\chi\in(0,1)$ such that
 $$
 \delta_q(H,i)=\|H(\FF_i)-H(\FF^*_i)\|_q =O(\chi^{i})~.
  	$$
  \end{description}
  }
  \end{remark} 

If assumption (L1) holds the covariance matrix $\Sigma_n=(\sigma_{i,j,n})_{1\leq i,j\leq n} $ in \eqref{1.2} is well defined, where
\begin{align} \label{2.1}
\sigma_{i,j,n}=\mbox{Cov}(X_{i,n},X_{j,n}) =
\E(G(i/n,\FF_i)G(j/n,\FF_j)) .
\end{align}
Throughout  this paper  we do not reflect the dependence on $n$ in the notation of the entries of a matrix, whenever it is clear from the context.
For example we will use   $\sigma_{i,j}$ instead  {of} $\sigma_{i,j,n}$ and similarly  a simplified notation for corresponding estimates.
We also define the  (time dependent)  auto-covariances
 \begin{align}
\label{gamma}
 \gamma_k(t)=\E(G(t,\FF_i)G(t,\FF_{i+k})) ~~~(
 k\in \Z)
 \end{align}
 of the stationary (for fixed $t \in [0,1]$) process $\{G(t,\FF_i)\}_{i \in \Z} $.
To estimate the covariances  in \eqref{2.1} 
we use a local linear regression estimate of the function $\gamma_k$.
In order to prove consistency  of this estimator
we require  a smoothness condition on the auto-covariances in \eqref{gamma}, which is formulated as follows.
\begin{description}
	\item (A1) For any $ k \in \Z$ the function $\gamma_k$ in \eqref{gamma} is differentiable with derivative $\dot \gamma_k(t)=\frac{\partial}{\partial t}\gamma_k(t)$.  There exists constants $D_k$ such that  for   all $t,s\in [0,1]$
	\begin{align*} 
     \left |\dot \gamma_k(t)-\dot  \gamma_k(s)\right|\leq D_k|t-s|.
      \end{align*}
	\end{description}
An application of  the   Cauchy-Schwarz inequality and the  dominated convergence theorem show  that  a sufficient condition for assumptions  (L2) and (A1), is given by (L1) and
\begin{align*} 
\sup_{t\in [0,1]}\Big \|\frac{\partial^2 }{\partial t^2}G(t,\FF_0)\Big \|_2<\infty.
\end{align*}
In the following section we will  use the  local linear estimates for the function $\gamma_k$ to  define a  banded estimate of the covariance  matrix $\Sigma_n$ of a locally  stationary process of the form \eqref{1.1}  and investigate its asymptotic  properties for increasing sample size. We also discuss a corresponding  estimator in the stationary case because usually estimators are  studied under the assumption of a centered stationary process, that is  $\mu \equiv 0$. In  the subsequent Section \ref{sec4}  we use these results  for prediction in  locally stationary processes with a non-vanishing trend.

\section{Covariance matrix estimation}
\label{sec3}
\def\theequation{3.\arabic{equation}}
\setcounter{equation}{0}

The estimation of the covariance matrix   has attracted considerable attention in the literature.  We refer among many others to the work of
\cite{bickel2008covariance}, \cite{bickel2008regularized}
for high-dimensional independent identically distributed data
and  \cite{anderson2003introduction}, \cite{wu2009banding}, \cite{chen2013covariance}, \cite{box2015time}, and \cite{mcmurry2015high} who considered this problem for time series.
Most authors consider the  case of  a vanishing trend, i.e.   $\mu \equiv 0$,  and assume  that  the error process  $\{\epsilon_{i,n} : i=1,\ldots,n \}$ is   a sequence of independent identical  observations or a     stationary series.
For example, in the case of a stationary centered process  \cite{wu2009banding}  proposed the  banded estimator
\begin{align}\label{estimator-banding}
\tilde  \Sigma_n= \{\tilde \sigma_{i,j}\mathbf 1(|i-j|\leq l_n), 1\leq i,j\leq n\}
	\end{align}
of the matrix $\Sigma_n$,
	where
	$\mathbf 1(A) $ denotes the indicator function of the set $A$ and $$
	\tilde \sigma_{i,j}=\frac{1}{n-|i-j|}\sum_{s=1}^{n-|i-j|}X_{s,n}X_{s+|i-j|,n},
	$$
	is the sample auto-covariance of $\{X_{1,n}, \ldots , X_{n,n} \}$ at lag $|i-j|$ 	and $l_n \in \N$ denotes a tuning parameter  satisfying $l_n\rightarrow \infty$, $l_n=o(n)$ as $n \to \infty$.
		\cite{mcmurry2010banded} modified this statistic such that the new  estimator leaves the  band  intact, and then gradually down-weighs increasingly distant off-diagonal entries instead of setting them to zero as in the banded matrix case. Both  estimators  use the fact that for stationary processes the matrix  $\Sigma_n$  is  a Toeplitz matrix.
	
 Note that the  estimator \eqref{estimator-banding} 
is not consistent for the  auto-covariance if the mean function is not constant. As there   are many applications where time  series have  a smoothly  changing mean function  we  begin our discussion  analyzing a mean-corrected estimator of the matrix $\Sigma_n$ for a   stationary {error} process of the form \eqref{1.1}, which avoids this problem.

Let  $\hat \mu $  be   the local linear estimator
 defined by
\begin{align}\label{locallinear}
(\hat \mu(t),\hat {\dot{\mu}}(t))^\top=\argmin_{(\beta_0,\beta_1)\in \mathbb R^2}\sum_{i=1}^n \big(X_{i,n}-\beta_0-\beta_1(i/n-t)\big)^2K\Big(\frac{i/n-t}{\tau_n}\Big)
\end{align}
where  $\tau_n$ denotes the bandwidth. For the kernel $K$  we make the following assumption:
\begin{description}	\item(K)
The kernel $K$ is a   symmetric,   continuously differentiable, bounded density function supported on the interval  $[-1, 1]$.
\end{description}
We consider   the residuals
 \begin{align} \label{res}
 \hat \epsilon_{i,n}=X_{i,n}-\hat \mu(i/n)
 \end{align}
 obtained from the local linear fit and denote by
$$
\hat\sigma^\dag_{i,j}
=\frac{1}{n-|i-j|}\sum_{s=1}^{n-|i-j|}\hat \epsilon_{s,n}\hat \epsilon_{s+|i-j|,n} ~~~(i,j=1 , \ldots , n)
$$
the sample auto-covariance of the residuals $\{\hat \epsilon_{1,n} , \ldots , \hat \epsilon_{n,n} \}$ at lag $|i-j|$.
Finally, we define  for $ l_{n}  \in \N $ the banded matrix
\begin{align}
\label{Mean-Corrected}
\hat \Sigma^\dag_{n}= \{\hat\sigma^\dag_{i,j}\mathbf 1(|i-j|\leq l_n)\},
\end{align}
as an estimator of the matrix $\Sigma_n$. It will  be shown below  that the estimator
$\hat \Sigma^\dag_{n}$ is consistent for $\Sigma_n$
in the case of a strictly stationary error process.
To measure the distance between two matrices (of increasing dimension) we introduce  the operator norm
\begin{align*}
\rho(A)=\max_{x\in \mathbb R^n:|x|=1}|Ax|
\end{align*}
 of a matrix $A$, where $|\cdot|$ denotes the Euclidean norm
 (note  that  $\rho^2(A)$ is the largest eigenvalue of the matrix $A^\top A$).

\begin{theorem}\label{thm1}
Assume that $n\tau_n^6=o(1)$, $n\tau_n^3\rightarrow \infty$, $l_n\rightarrow \infty$, $\frac{l_n^2}{n}=o(1)$. If conditions (K), (S1), (S2) and (M1) hold, then
	\begin{align*}
	\|\rho(\hat\Sigma^\dag_{n} -\Sigma_n)\|_{q/2} = O (r^\diamond_n) ,
		\end{align*}
		where the sequence $r^\diamond_n$ is defined by
	$$
		r^\diamond_n=l_n(\tau_n^2+(n\tau_n)^{-1/2})+\frac{l_n^2}{n}+\chi^{l_n}.
		$$
	\end{theorem}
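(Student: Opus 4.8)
The plan is to compare $\hat\Sigma^\dag_n$ with the \emph{oracle} banded estimator $\tilde\Sigma_n=\{\tilde\sigma_{i,j}\mathbf 1(|i-j|\le l_n)\}$ built from the \emph{unobserved} errors $\epsilon_{i,n}=H(\FF_i)$, where $\tilde\sigma_{i,j}=\frac{1}{n-|i-j|}\sum_{s}\epsilon_{s,n}\epsilon_{s+|i-j|,n}$, and to split
\[
\rho(\hat\Sigma^\dag_n-\Sigma_n)\le \rho(\tilde\Sigma_n-\Sigma_n)+\rho(\hat\Sigma^\dag_n-\tilde\Sigma_n).
\]
Throughout I would use the elementary fact that for a symmetric matrix $A$ of bandwidth $l_n$ one has $\rho(A)\le \max_i\sum_j|A_{ij}|$; since every matrix appearing below is banded and, up to boundary effects, Toeplitz, the row sums do not depend on $i$ and no union bound over the $n$ rows is needed.

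For the oracle term I would decompose $\tilde\Sigma_n-\Sigma_n$ into its on-band part, with entries $\tilde\sigma_{i,j}-\gamma_{|i-j|}$ for $|i-j|\le l_n$, and its off-band part, which equals $-\gamma_{|i-j|}$ for $|i-j|>l_n$. First I would record that (S2) forces geometric decay $|\gamma_k|=O(\chi^k)$ of the autocovariances, so the row sum of the off-band part is $O(\chi^{l_n})$. For the on-band part, each centred sample autocovariance $\tilde\sigma_{i,j}-\gamma_{k}=\frac{1}{n-k}\sum_s(\epsilon_{s,n}\epsilon_{s+k,n}-\gamma_k)$ is an average of a stationary sequence whose physical-dependence measures inherit the geometric decay of those of $\{\epsilon_{s,n}\}$; a moment inequality for sums of physically dependent variables (e.g.\ of Burkholder or Nagaev type) then gives $\|\tilde\sigma_{i,j}-\gamma_k\|_{q/2}=O(n^{-1/2})$, and summing over the $O(l_n)$ bands yields $O(l_n n^{-1/2})$, which is dominated by $l_n(n\tau_n)^{-1/2}$.

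For the mean-correction term, writing $e_s=\hat\mu(s/n)-\mu(s/n)$ and $\hat\epsilon_{s,n}=\epsilon_{s,n}-e_s$, the difference $\hat\sigma^\dag_{i,j}-\tilde\sigma_{i,j}$ equals $-\frac{1}{n-k}\sum_s(\epsilon_{s,n}e_{s+k}+e_s\epsilon_{s+k,n})+\frac{1}{n-k}\sum_s e_s e_{s+k}$, and each of the three resulting matrices is again banded and Toeplitz. I would bound their row sums by Cauchy--Schwarz in $s$: after taking $\|\cdot\|_{q/2}$ and splitting by H\"older, the cross terms are controlled by $\frac{2l_n}{n}\|\,|\epsilon|_2\,\|_q\|\,|e|_2\,\|_q$ and the quadratic term by $\frac{2l_n}{n}\|\,|e|_2\,\|_q^2$, where $|\epsilon|_2,|e|_2$ are the Euclidean norms of $(\epsilon_{1,n},\dots,\epsilon_{n,n})$ and $(e_1,\dots,e_n)$. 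Here $\|\,|\epsilon|_2\,\|_q=O(\sqrt n)$ from (S1), while $\E\,|e|_2^2=\sum_s(\mathrm{bias}_s^2+\mathrm{Var}\,\hat\mu(s/n))=O(n\tau_n^4+\tau_n^{-1})$, the bias bound $O(\tau_n^2)$ coming from (M1) and (K) and the variance bound $O((n\tau_n)^{-1})$ from the kernel weights together with (S1)--(S2). This produces the contribution $l_n(\tau_n^2+(n\tau_n)^{-1/2})$, while the quadratic term and the non-Toeplitz boundary corrections of order $k/n$ (summed over $k\le l_n$) produce the residual $l_n^2/n$.

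The main obstacle I anticipate is the control of the estimation-error vector $e$ in a way that is simultaneously strong enough in the relevant $\|\cdot\|_{q/2}$-sense to survive the bilinear sample-autocovariance forms and uniform enough to feed the Schur row-sum bound; this is where the bandwidth conditions enter, with $n\tau_n^3\to\infty$ and $n\tau_n^6=o(1)$ guaranteeing that $\hat\mu$ and its slope $\hat{\dot{\mu}}$ are accurate enough that the higher-order remainders of the local linear expansion are negligible relative to $r^\diamond_n$. Establishing the moment inequality for the sample autocovariances of the dependent products $\{\epsilon_{s,n}\epsilon_{s+k,n}\}$, and checking that products of physically dependent sequences retain geometrically decaying dependence measures, is the second technical pillar. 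Once these ingredients are in place, collecting the contributions $\chi^{l_n}$, $l_n n^{-1/2}$, $l_n(\tau_n^2+(n\tau_n)^{-1/2})$ and $l_n^2/n$ and absorbing the dominated ones gives the stated rate.
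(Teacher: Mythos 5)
Your proposal is correct and follows essentially the same route as the paper, which itself only remarks that Theorem 3.1 follows by ``similar but substantially simpler arguments'' than Theorem 3.2: a triangle inequality against the true-error (oracle) banded estimator, a Gershgorin/Schur row-sum bound exploiting the Toeplitz band structure, the geometric tail $|\gamma_k|=O(\chi^{|k|})$ for the off-band part, per-lag moment bounds for the sample autocovariances, and the $L^q$-accuracy $O(\tau_n^2+(n\tau_n)^{-1/2})$ of the local linear fit (the paper's Proposition 6.1) for the mean correction. The only immaterial variation is that you control the cross terms $\tfrac{1}{n-k}\sum_s \epsilon_{s,n}e_{s+k}$ by direct Cauchy--Schwarz, whereas the paper's machinery (its Proposition 6.2) compares the partial sums of $\epsilon_{i,n}\epsilon_{i+k,n}$ and $\hat\epsilon_{i,n}\hat\epsilon_{i+k,n}$ uniformly and uses summation by parts; both give contributions dominated by $l_n(\tau_n^2+(n\tau_n)^{-1/2})$.
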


 Theorem \ref{thm1} establishes consistency of the estimator of the covariance matrix in model \eqref{1.1} 
 in the operator norm under the assumption of   a stationary error process. However,
there also exist  many time series exhibiting  a non-stationary behaviour in the higher order moments and dependence structure  [see \cite{stuaricua2005nonstationarities}, \cite{elsner2008increasing}, \cite{guillaumin2017analysis}
among others], and
estimation under the assumption of a location model with a stationary error process  might be misleading.
In this case the  estimator $\hat \Sigma^\dag_{n}$ in \eqref{Mean-Corrected}
is not necessarily consistent since the unknown covariance matrix $\Sigma_n$ is not a Toeplitz  matrix.
To address this problem we  propose an alternative
approach which also  yields a  consistent estimator for non-stationary time series.
Roughly speaking, we estimate the elements $\sigma_{i,j} $ in the matrix $\Sigma_n$ by
\begin{align} \label{varest}
\hat  \sigma_{i,j} =
 \hat   \gamma_{|i-j|}\Big (\frac{i+j}{2n}\Big ),
\end{align}
where $\hat   \gamma_k (t) $ is a local linear estimate of
 the auto-covariance function \eqref{gamma} of the process $\{G(t,\FF_i)\}_{i \in \Z} $.

To be precise,  we distinguish between  a  lag of odd or even order and define
 \begin{align}\label{May5-35}
(\hat \gamma_{k}(t),\hat \gamma'_{k}(t))^\top=\argmin_{(\beta_0,\beta_1)\in \mathbb R^2}\sum_{i=1}^n\big(\hat \epsilon_{i-k/2,n}\hat \epsilon_{i+k/2,n}-\beta_0-\beta_1(i/n-t)\big)^2K\Big (\frac{i/n-t}{b_n}\Big )
\end{align}
if the  lag  $k$ is of   even order, where $b_n$ is a bandwidth  and
the   residuals $\hat  \epsilon_{i,n} $ are defined in \eqref{res}. In \eqref{May5-35} we use the notation $\hat \epsilon_{i,n}=0$ if the index $i$ satisfies $i<0$ or $i>n$. Similarly, for an  odd lag  $k$ we  define
\begin{align}\label{May5-36}
\hat \gamma_k(t)=\frac{1}{2}\big(\hat \gamma^+_k(t)+\hat \gamma^-_k(t)\big),
\end{align}
where
\begin{align*}
(\hat \gamma^+_{k}(t),(\hat \gamma_{k}^+)'(t))^\top=\argmin_{(\beta_0,\beta_1)\in \mathbb R^2}\sum_{i=1}^n\big(\hat \epsilon_{i-(k-1)/2,n}\hat \epsilon_{i+(k+1)/2,n}-\beta_0-\beta_1(i/n-t)\big)^2K\Big (\frac{i/n-t}{b_n}\Big ),\\
(\hat \gamma^-_{k}(t),(\hat \gamma_{k}^-)'(t))^\top=\argmin_{(\beta_0,\beta_1)\in \mathbb R^2}\sum_{i=1}^n\big(\hat \epsilon_{i-(k+1)/2,n}\hat \epsilon_{i+(k-1)/2,n}-\beta_0-\beta_1(i/n-t)\big)^2K\Big (\frac{i/n-t}{b_n}\Big ).
\end{align*}
The estimator of the element $\sigma_{i,j}$ in $\Sigma_n$
is finally defined by \eqref{varest}  and for  the covariance
matrix we use again a banded estimator, that
is
\begin{align} \label{locest}
\hat \Sigma_{n}:=\Big (\hat \gamma_{|i-j|} \big (\frac{i+j}{2n})\mathbf 1(|i-j|\leq l_n \big ) \Big )_{1\leq i,j\leq n}.
\end{align}
Our next result yields the consistency of this estimator in the operator norm.

\begin{theorem}\label{thm2} Assume that $n\tau_n^3\rightarrow \infty$, $n\tau_n^6=o(1)$, $\frac{l_n^2}{n}=o(1)$, $l_nb_n^2=o(1),$ 
$$l_n((nb_n)^{-1/2}b_n^{-2/q}+\tau_n^2+(n\tau_n)^{-1/2})=o(1)
~ \mbox{ and } ~~~b_n^2\sum_{k=0}^{l_n}D_k=o(1)
$$ 
If the  conditions (K),  (L1)--(L3), (A1) and (M1)
are satisfied, then
we have 
\begin{align*} 
\|\rho(\hat \Sigma_{n}-\Sigma_n)\|_{q/2}=O( r_n),
\end{align*}
where the sequence $r_n$ is defined by  
\begin{equation}\label{rn}
r_n=l_n((nb_n)^{-1/2}b_n^{-2/q}+\tau_n^2+(n\tau_n)^{-1/2})+\frac{l_n^2}{n}+\chi^{l_n}+ b_n^2 \sum_{k=0}^{l_n}D_k = o(1) .
\end{equation}
\end{theorem}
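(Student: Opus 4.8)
The plan is to bound the operator norm $\rho(\hat\Sigma_n - \Sigma_n)$ by first reducing to a sum over diagonal bands and then controlling each band uniformly. Since both matrices are banded with bandwidth $l_n$ (the true matrix $\Sigma_n$ is effectively banded up to the negligible tail $\chi^{l_n}$ coming from the geometric decay of $\delta_q(G,i)$ in (L3)), I would first write $\hat\Sigma_n - \Sigma_n = (\hat\Sigma_n - \Sigma_n^{(l_n)}) + (\Sigma_n^{(l_n)} - \Sigma_n)$, where $\Sigma_n^{(l_n)}$ is $\Sigma_n$ truncated to the band $|i-j|\le l_n$. The second term contributes $O(\chi^{l_n})$ to the operator norm by the standard Gershgorin-type bound $\rho(A)\le \max_i\sum_j |a_{ij}|$ together with (L3). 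For the first term I would use the same Gershgorin bound, so that
\begin{align*}
\rho\big(\hat\Sigma_n - \Sigma_n^{(l_n)}\big) \le \max_{1\le i\le n}\sum_{j:\,|i-j|\le l_n} \Big| \hat\gamma_{|i-j|}\Big(\tfrac{i+j}{2n}\Big) - \sigma_{i,j}\Big|.
\end{align*}
This converts the matrix norm problem into a uniform-in-$(i,j)$ bound on the entrywise error, multiplied by the band width $2l_n+1$.

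The heart of the argument is therefore a uniform bound on $\big|\hat\gamma_{k}(\tfrac{i+j}{2n}) - \sigma_{i,j}\big|$ for $|i-j| = k \le l_n$. I would decompose this error into three pieces: a bias from approximating $\sigma_{i,j} = \gamma_k(i/n) + O(|i-j|/n)$ by the value of $\gamma_k$ at the midpoint $\tfrac{i+j}{2n}$ (here (A1) and the Lipschitz derivative control the smoothing bias, contributing the $b_n^2\sum_k D_k$ and $l_n b_n^2$ terms); a stochastic error of the local linear estimate $\hat\gamma_k(t)$ built from the \emph{true} residual products, which by a maximal inequality for physically dependent sums (using (L1)--(L3)) gives a term of order $(nb_n)^{-1/2} b_n^{-2/q}$ in $L^{q/2}$; and finally the effect of replacing the unobserved errors $\epsilon_{i,n}$ by residuals $\hat\epsilon_{i,n}$ through the estimation error $\hat\mu - \mu$. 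For this last piece I would invoke the local linear regression rate $\|\hat\mu(t) - \mu(t)\|_q = O(\tau_n^2 + (n\tau_n)^{-1/2})$ under (M1) and (K), and expand the residual product $\hat\epsilon_{s}\hat\epsilon_{s+k} = (\epsilon_s - (\hat\mu - \mu)(s/n))(\epsilon_{s+k} - (\hat\mu - \mu)((s+k)/n))$, so that the cross terms inherit the $\tau_n^2 + (n\tau_n)^{-1/2}$ rate. Summing the three contributions and multiplying by $l_n$ reproduces exactly the rate $r_n$ in \eqref{rn}.

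The step I expect to be the main obstacle is the stochastic term for the local linear estimator of $\gamma_k$ in the $L^{q/2}$ norm, uniformly over $k \le l_n$ and over the evaluation point $t = \tfrac{i+j}{2n}$. Unlike the pointwise-in-$t$ case, the operator-norm bound forces a uniform control, which requires a maximal inequality (e.g.\ a Nagaev- or Rosenthal-type inequality for sums of physically dependent random variables, or a chaining argument over the grid of midpoints) rather than a simple variance calculation; this is where the factor $b_n^{-2/q}$ arises, reflecting the moment order $q$ and the need to bound a maximum over roughly $n$ points. A secondary technical difficulty is handling the boundary convention $\hat\epsilon_{i,n}=0$ for $i<0$ or $i>n$ in \eqref{May5-35}, which introduces at most $O(k)$ boundary summands per lag and must be shown not to inflate the stochastic rate; I expect this to be absorbed into the $l_n^2/n$ term. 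Once the uniform entrywise bound is established in $L^{q/2}$, assembling the final operator-norm bound is routine, and the assumed bandwidth conditions guarantee $r_n = o(1)$, yielding consistency.
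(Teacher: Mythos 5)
Your proposal follows essentially the same route as the paper's proof: a Gershgorin-type reduction of the operator norm to banded row sums, an $O(\chi^{l_n})$ truncation error from the geometric decay of the covariances, and a uniform-in-$t$, uniform-in-$k\le l_n$ entrywise bound on $\hat\gamma_k-\gamma_k$ decomposed into smoothing bias (via (A1)), a stochastic term carrying the $b_n^{-2/q}(nb_n)^{-1/2}$ rate from the $L^{q/2}$ maximal control, and the residual-versus-error effect driven by the local linear rate of $\hat\mu$. The only mechanical difference is that the paper channels the residual effect through a partial-sum maximal inequality and summation by parts (its Propositions 6.2 and 6.3) rather than a direct expansion of the residual products, but this does not change the substance of the argument.
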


\begin{remark}
	{\rm ~
	\begin{itemize}
	\item[(a)]
	In the case of a stationary and centered time series it
	has been demonstrated by \cite{mcmurry2015high} that tapering  can improve the performance of simply banded  estimators of the  covariance matrix and  similar arguments apply to the  covariance  estimators  \eqref{Mean-Corrected} and \eqref{locest} proposed in this paper  for  stationary times series with a time varying mean function and for locally stationary times series. To be precise
	consider the situation in Theorem \ref{thm2}  and define the tapering function  (other tapers could be used as well) by
		\begin{align*}
	\kappa(x)=
	(2-|x| )  \mathbf{1}(1\leq |x|\leq 2 ) + 
	  \mathbf{1}(|x|<1)
	\end{align*}
	The tapered and banded  estimate of the covariance matrix $\Sigma_n$ is now defined by
	\begin{align*}
	\hat \Sigma^{tap}_{n}:=\Big (\kappa\Big (\frac{|i-j|}{l_n}\Big )\tilde \gamma_{|i-j|}(\frac{i+j}{2n})\Big)_{1\leq i,j\leq n}.
	\end{align*}
	Using the same arguments  as  in the proof of  Theorem \ref{thm2} it can be shown
	that
		\begin{align}
	\|\rho(\hat \Sigma^{tap}_{n}-\Sigma_n)\|_{q/2}
	= O (r_n) ,\notag
	\end{align}
	where the sequence $r_n$ is defined in \eqref{rn}.
\item[(b)]
It is worthwhile to mention  that recently \cite{ding2018estimation} proposed an alternative estimate of the
 the precision matrix  $\Sigma_{n}^{-1}$ of a  {\bf centered} locally stationary series, which is
  based on a  Cholesky decomposition. In contrast the estimator $\hat \Sigma_{n}^{-1}$
   considers the inverse of  a  banded estimator of the  covariance matrix of a locally stationary series  with
   a smoothly  varying trend.
\end{itemize}   
}
\end{remark}

     \section{Prediction}\label{sec4}
     \def\theequation{4.\arabic{equation}}
\setcounter{equation}{0}

     In this section we discuss some applications of the proposed estimators in the problem to perform predictions 
      in locally stationary processes. For centered time series this problem has been recently investigated by  \cite{roueff2018prediction}, \cite{kley2019predictive} who proposed to fit
  a locally stationary  AR model and perform the  prediction using an AR approximation.  In this section, we suggest
an alternative method which is not based on  AR fitting.
To be precise, assume that   we observe a stretch of data $X_{1,n}, \ldots, X_{m,n}$ from the model \eqref{1.1}
     and that we are interested in a  prediction of the next observation $X_{m+1,n}$.
     To be precise, our
       aim is the   construction of best linear predictor of $X_{m+1,n}$ based on $X_{1,n},\ldots  ,X_{m,n}$.  For this purpose we define
   \begin{align}\label{April-6-hatX}
 X^{\rm Pred}_{m+1,n}:=a_{m+1,n}+\sum_{s=1}^m a_{m+1-s,n}X_{s,n}=\mathbf a_m ^{\top }   \mathbf X_{m,n} ,
   \end{align}
   where   $\mathbf X_{m,n}=(1,X_{1,n},...,X_{m,n})^\top$    and       the prediction vector $\mathbf a_m=(a_{m+1,n},a_{m,n}...,a_{1,n})^\top:=(a_{m+1,n}, (\mathbf{a_m}^{*} )^{\top} )^{\top}$ is given  by
 \begin{align}\label{April 6-33}
   \mathbf a_m=(a_{m+1,n}, (\mathbf{a_m}^{*} )^{\top} )^{\top}= \argmin_{\theta\in \mathbb R^{m+1}}\E(X_{m+1,n}-\theta^\top\mathbf X_{m,n})^2.
   \end{align}
In order to estimate the vector $ \mathbf a_m$ we define  the  local linear estimators
  from the sample   $X_{1,n}, \ldots, X_{m,n}$ by
    \begin{align}\label{localnew}
   (\hat \mu^{1:m} (t),\hat {\dot{\mu}}^{1:m}(t))^\top=\argmin_{(\beta_0,\beta_1)\in \mathbb R^2}\sum_{i=1}^m(X_{i,n}-\beta_0-\beta_1(i/n-t))^2K\left(\frac{i/n-t}{\tau_n}\right) ,
   \end{align}
and denote by
   \begin{equation}\label{hneu1}
   \Sigma_{n,m}=(\sigma_{i,j,n})_{1\leq i,j\leq m} = \big({\rm Cov} (X_{i,n}, X_{j,n})\big)_{1 \leq i,j \leq m}
  \end{equation}
     the covariance matrix of  the vector $(X_{1,n}, \ldots, X_{m,n})^T$.
    The residuals \eqref{res}
   for estimating the auto-covariances are then replaced by residuals
   by 
   $$
    \hat \epsilon_{i,n}^{1:m}
    =X_{i,n}-\hat \mu^{(1:m)} (i/n) \quad (i =1, \ldots , m)
    $$
    from the nonparametric fit from the data $X_{1,n}, \ldots, X_{m,n}$.
   Next, we define $\hat \gamma_k^{1:m}$ as the analogue of the estimator \eqref{May5-35} (if the lag $k$ is even) and \eqref{May5-36} (if the lag is odd), where the residual $\hat \epsilon_{\ell,n}$ is replaced by 
   $\hat \epsilon_{\ell,n}^{1:m}$.
       We further define
   \begin{align} \label{hneu2}
   \hat  \Sigma_{n,m}:=\Big (\hat \gamma^{1:m}_{|u-v|}\big (\frac{u+v}{2n} \big )\mathbf 1(|u-v|\leq l_n)\Big )_{1\leq u,v\leq m}
   \end{align}
  as  a banded  estimator of the  covariance  matrix  $\Sigma_{n,m}:=\text{Cov}(X_{i,n},X_{j,n})_{1\leq j\leq m}$
  in \eqref{hneu1}. It can be shown
  that, if the assumptions   of Theorem \ref{thm2} are satisfied  and   $m\geq \lf cn\rf$ for some positive constant $c$,
  	\begin{align}   \label{rnm}
\|\hat \Sigma_{n,m}-\Sigma_{n,m}\|_{q/2}=O(r_n) ,
  	\end{align}
where the sequence  $r_n$ is defined in  \eqref{rn}.
    We shall construct a predictor based on $ \hat \Sigma_{n,m}^{-1}$ and for  this purpose we show
    that the  consistency of the estimator $\hat \Sigma_{n,m}$   in    \eqref{rnm}
can be transferred to its inverse. \\
 Throughout this paper  we denote $\lambda_{min}(A)$  the minimum eigenvalue of a symmetric matrix $A$ and make the
following  assumption.
\begin{description}
	\item (E1)  There exists a constant $c>0$ such that
	$$
	\eta= \liminf_{n\to \infty  }\inf_{\lf cn\rf\leq m\leq n}\lambda_{min}(\Sigma_{n,m}) >0.
	$$
\end{description}

\begin{corollary}\label{Corol2}Assume that the conditions of Theorem \ref{thm2} and condition (E1) are satisfied. If
$n\to \infty $,  $\lf cn\rf\leq m\leq n$ we have
\begin{align}\label{hatSigmanm}
\rho(\hat \Sigma_{n,m}^{-1}-\Sigma_{n,m}^{-1})= O_{\mathbb{P}}(r_n)
\end{align}
\end{corollary}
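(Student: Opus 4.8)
The plan is to deduce the statement from the consistency bound \eqref{rnm} and condition (E1) by a standard resolvent (perturbation) argument. The starting point is the identity
\[
\hat \Sigma_{n,m}^{-1}-\Sigma_{n,m}^{-1}
= -\,\hat \Sigma_{n,m}^{-1}\big(\hat \Sigma_{n,m}-\Sigma_{n,m}\big)\Sigma_{n,m}^{-1},
\]
valid whenever both matrices are invertible. Since the operator norm $\rho$ is submultiplicative, this gives
\[
\rho\big(\hat \Sigma_{n,m}^{-1}-\Sigma_{n,m}^{-1}\big)
\le \rho\big(\hat \Sigma_{n,m}^{-1}\big)\,\rho\big(\hat \Sigma_{n,m}-\Sigma_{n,m}\big)\,\rho\big(\Sigma_{n,m}^{-1}\big),
\]
and it remains to control each of the three factors on the right-hand side.

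First I would convert \eqref{rnm} from a moment statement into a stochastic-order statement: applying Markov's inequality to $\|\rho(\hat \Sigma_{n,m}-\Sigma_{n,m})\|_{q/2}=O(r_n)$ (which is licit since $q/2\ge 1$) yields $\rho(\hat \Sigma_{n,m}-\Sigma_{n,m})=O_{\mathbb P}(r_n)$, and I would keep track of the fact that the bound is uniform over $\lf cn\rf\le m\le n$. This takes care of the middle factor. For the last factor, note that both $\Sigma_{n,m}$ and $\hat \Sigma_{n,m}$ are symmetric, because their entries depend on the indices only through $|u-v|$ and $(u+v)/(2n)$. Hence $\Sigma_{n,m}$ is positive definite and, by condition (E1), for all $n$ large enough $\lambda_{min}(\Sigma_{n,m})\ge \eta/2$ uniformly in $m$, so that $\rho(\Sigma_{n,m}^{-1})=1/\lambda_{min}(\Sigma_{n,m})\le 2/\eta$.

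The main obstacle is the first factor $\rho(\hat \Sigma_{n,m}^{-1})$, since a priori $\hat \Sigma_{n,m}$ need not even be invertible; this is precisely why the conclusion is an $O_{\mathbb P}$-statement rather than a moment bound. Here I would invoke Weyl's inequality for the symmetric matrices $\hat \Sigma_{n,m}$ and $\Sigma_{n,m}$, which gives
\[
\lambda_{min}(\hat \Sigma_{n,m})\ge \lambda_{min}(\Sigma_{n,m})-\rho\big(\hat \Sigma_{n,m}-\Sigma_{n,m}\big)
\ge \tfrac{\eta}{2}-O_{\mathbb P}(r_n).
\]
Since $r_n=o(1)$ by \eqref{rn}, the event $\{\lambda_{min}(\hat \Sigma_{n,m})\ge \eta/4\}$ has probability tending to one, uniformly in $m$ because both $\eta$ and the constant in \eqref{rnm} are uniform over $\lf cn\rf\le m\le n$. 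On this event $\hat \Sigma_{n,m}$ is invertible with $\rho(\hat \Sigma_{n,m}^{-1})\le 4/\eta$.

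Combining the three estimates, on the high-probability event above,
\[
\rho\big(\hat \Sigma_{n,m}^{-1}-\Sigma_{n,m}^{-1}\big)
\le \frac{4}{\eta}\cdot O_{\mathbb P}(r_n)\cdot\frac{2}{\eta}=O_{\mathbb P}(r_n),
\]
which is the assertion of Corollary \ref{Corol2}. The only genuinely delicate point is of a uniform-in-$m$ nature: one has to ensure that \eqref{rnm} and the eigenvalue lower bound (E1) both hold with constants independent of $m$ on the range $\lf cn\rf\le m\le n$, so that the resulting stochastic order is genuinely uniform and the vanishing of the ``bad'' event is not destroyed by taking a supremum over $m$.
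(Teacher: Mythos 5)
Your proof is correct and follows essentially the same first-order perturbation argument as the paper: the paper routes the computation through the whitened matrix $W=\Sigma_{n,m}^{-1/2}\hat \Sigma_{n,m}\Sigma_{n,m}^{-1/2}$, shows $\rho(W-I)=O_{\mathbb{P}}(r_n)$ using \eqref{rnm} and (E1), and then invokes the argument of Theorem 2 of McMurry and Politis (2010), which amounts to exactly your combination of the resolvent identity, the operator-norm consistency bound, and the uniform eigenvalue lower bound. Your explicit use of Weyl's inequality to get $\rho(\hat \Sigma_{n,m}^{-1})\le 4/\eta$ on an event of probability tending to one is simply a self-contained rendering of the step the paper delegates to that citation.
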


  We can now define an estimate $\hat{ \mathbf {a}}_m=(\hat a_{m+1,n},(\hat {\mathbf{a}}_m^*)^\top)^\top$ 
     of the vector $\mathbf a_m$ in \eqref{April 6-33}
by
$$
\hat a_{m+1,n} =\hat \mu^{1:m}(m/n)-\sum_{s=1}^m\hat a_{m+1-s,n}\hat \mu^{1:m}(s/n),\label{hatam}\\
$$
and 
   \begin{eqnarray}\label{hat_a_star}
   \hat {\mathbf a}_m^*  & =&(\hat a_{m,n},...,\hat a_{1,n} )^\top=\hat\Sigma_{n,m}^{-1}\boldsymbol {\hat\gamma}_{n}^{1:m}, 
   \end{eqnarray}
where
  \begin{align}
  \nonumber
         \boldsymbol {\hat\gamma}_{n}^{1:m} 
         &= ({\hat \gamma}_{n,m}^{1:m},{\hat\gamma}_{n,m-1}^{1:m},...,{\hat\gamma}_{n,1}^{1:m})^\top,
\\
     \nonumber 
       {\hat\gamma}_{n,s}^{1:m} & =\hat \gamma^{1:m}_{s}\Big(\frac{2m-s+1}{2n}\Big)\mathbf 1(1\leq s\leq l_n ).
         \end{align}
The final predictor of $X_{m+1,n}$ is defined by
   \begin{align}\label{hol6}
 \hat X^{\rm Pred}_{m+1,n}:=  \hat a_{m+1,n}+\sum_{s=1}^m  \hat a_{m+1-s,n}X_{s,n},
   \end{align}

   \begin{theorem}\label{Thm4}
   	Assume that the conditions of Theorem \ref{thm2} and  assumption (E1) are satisfied,  $\liminf_{n\rightarrow 0} \frac{l_n}{\log n}\geq \eta>0$
	and assume that there
exists a constant $c\in(0,1)$ such that for $m\geq cn$, $m\geq \lf nb_n\rf$.

  (a) The vector $\hat {\mathbf{a}}_m=(\hat a_{m+1,n}, (\hat {\mathbf a}_m^*)^\top)^\top$  is a consistent estimator of the coefficient vector $\mathbf a_m$ of the best linear predictor defined in \eqref{April 6-33}, i.e.,
   	\begin{align*}
   |\hat {\mathbf a}^*_m-\mathbf a^*_m|=O_{\mathbb{P}}(r_n), \quad\hat a_{m+1,n}- a_{m+1,n}=O_{\mathbb{P}}(r^\circ_n)
   	\end{align*}
	where $r_n$ is defined in \eqref{rn}, and 
   	\begin{align} \label{rno}
   		 r_n^\circ =(l_n^{1/2}\log^{1/2} n)r_n+\sqrt n\chi^{l_n}.
   		\end{align}
  (b) Assume that $r_n^\circ=o(1)$. If the error $\epsilon_{i,n}$ is a locally stationary AR($p$) process  as defined in Example \ref{ex1} and
	\begin{description}
   		\item (P1) $n^{\frac{1}{q}}r_n=o(1)$.
   		\item (P2) $\delta_q(\dot G, i)=O(\chi^i)$,
   		\item (P3) $\sup_{t\in [0,1]}\|\dot G(t,\FF_i)\|_q<\infty$,
   	\end{description}
	where $\dot G(t,\FF_i)=\frac{\partial}{\partial t}G(t,\FF_i)$ denotes the derivative of the filter $G$, we have
   	\begin{equation} \label{th3stat}
	\frac{X_{m+1,n}-\hat X^{\rm Pred}_{m+1,n}}{\sigma(\frac{m+1}{n})}\Rightarrow \varepsilon_{1}
   	\end{equation}
   	where $\Rightarrow $ denotes the convergence in distribution and $\varepsilon_1$ denotes the error in model \eqref{LocAR} .
   	\end{theorem}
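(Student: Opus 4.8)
The plan is to establish part (a) first and then deduce part (b) from it, using Theorem \ref{thm2} and Corollary \ref{Corol2} as the main inputs. For the slope vector I would begin from the normal equations associated with \eqref{April 6-33}, which give the closed form $\mathbf a_m^*=\Sigma_{n,m}^{-1}\boldsymbol\gamma_n$, where $\boldsymbol\gamma_n$ collects the covariances $\mathrm{Cov}(X_{m+1,n},X_{s,n})$, $s=1,\dots,m$, and compare it with the plug-in $\hat{\mathbf a}_m^*=\hat\Sigma_{n,m}^{-1}\hat{\boldsymbol\gamma}_n^{1:m}$ from \eqref{hat_a_star}. Writing
\[
\hat{\mathbf a}_m^*-\mathbf a_m^*=(\hat\Sigma_{n,m}^{-1}-\Sigma_{n,m}^{-1})\hat{\boldsymbol\gamma}_n^{1:m}+\Sigma_{n,m}^{-1}(\hat{\boldsymbol\gamma}_n^{1:m}-\boldsymbol\gamma_n),
\]
I would bound the first summand by $\rho(\hat\Sigma_{n,m}^{-1}-\Sigma_{n,m}^{-1})\,|\hat{\boldsymbol\gamma}_n^{1:m}|$, using Corollary \ref{Corol2} for the operator norm and $|\hat{\boldsymbol\gamma}_n^{1:m}|=O_{\mathbb{P}}(1)$, and the second by $\rho(\Sigma_{n,m}^{-1})\,|\hat{\boldsymbol\gamma}_n^{1:m}-\boldsymbol\gamma_n|$, where $\rho(\Sigma_{n,m}^{-1})\le 1/\eta$ by (E1). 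The estimate $|\hat{\boldsymbol\gamma}_n^{1:m}-\boldsymbol\gamma_n|=O_{\mathbb{P}}(r_n)$ follows by viewing this vector as a column of the banded covariance-matrix difference on the extended index set $\{1,\dots,m+1\}$, whose Euclidean norm is dominated by the operator norm controlled in Theorem \ref{thm2}, the geometric tail $(\sum_{k>l_n}\gamma_k^2)^{1/2}=O(\chi^{l_n})$ being absorbed into $r_n$. This gives $|\hat{\mathbf a}_m^*-\mathbf a_m^*|=O_{\mathbb{P}}(r_n)$.

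For the intercept I would compare \eqref{hatam} with the population identity $a_{m+1,n}=\mu(\tfrac{m+1}{n})-\sum_{s=1}^m a_{m+1-s,n}\mu(s/n)$ and split the difference into three pieces: the boundary mean term $\hat\mu^{1:m}(m/n)-\mu(\tfrac{m+1}{n})$, which is $O_{\mathbb{P}}(\tau_n^2+(n\tau_n)^{-1/2})$ plus $O(1/n)$; the term $\sum_s a_{m+1-s,n}(\hat\mu^{1:m}(s/n)-\mu(s/n))$, negligible because $\sum_s|a_{m+1-s,n}|=O(1)$; and the main term $\sum_s(\hat a_{m+1-s,n}-a_{m+1-s,n})\hat\mu^{1:m}(s/n)$. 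For the last piece I would split the lag range at $l_n$. On lags not exceeding $l_n$, Cauchy--Schwarz yields $l_n^{1/2}\sup|\mu|\,|\hat{\mathbf a}_m^*-\mathbf a_m^*|=O_{\mathbb{P}}(l_n^{1/2}r_n)$, with an additional $\log^{1/2}n$ entering through the maximal bound used to replace $\hat\mu^{1:m}$ by $\mu$ uniformly over the $m$ design points. On lags exceeding $l_n$, Cauchy--Schwarz gives $(\sum_{\ell>l_n}(\hat a_\ell-a_\ell)^2)^{1/2}(\sum_{\ell>l_n}\mu^2)^{1/2}$, where the first factor is $O_{\mathbb{P}}(\chi^{l_n})$ by the geometric decay of the prediction coefficients and their estimates, and the second is $O(\sqrt m)=O(\sqrt n)$; this produces the $\sqrt n\,\chi^{l_n}$ contribution. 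Collecting terms reproduces the rate $r_n^\circ$ in \eqref{rno}.

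For part (b) I would decompose the prediction error as
\[
X_{m+1,n}-\hat X^{\rm Pred}_{m+1,n}=\big(X_{m+1,n}-X^{\rm Pred}_{m+1,n}\big)+(\mathbf a_m-\hat{\mathbf a}_m)^\top\mathbf X_{m,n}.
\]
For the oracle term I would invoke the AR($p$) recursion \eqref{LocAR}, $\epsilon_{m+1,n}=\sum_{s=1}^p a_s(\tfrac{m+1}{n})G(\tfrac{m+1}{n},\FF_{m+1-s})+\sigma(\tfrac{m+1}{n})\varepsilon_{m+1}$, and replace $G(\tfrac{m+1}{n},\FF_{m+1-s})$ by $\epsilon_{m+1-s,n}=G(\tfrac{m+1-s}{n},\FF_{m+1-s})$ at a cost of $O_{\mathbb{P}}(s/n)$ terms controlled by (L2) and (P2)--(P3). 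Since the finite-past best linear predictor reproduces these AR coefficients up to a geometrically small boundary correction, this term equals $\sigma(\tfrac{m+1}{n})\varepsilon_{m+1}+o_{\mathbb{P}}(\sigma(\tfrac{m+1}{n}))$. For the estimation term I would write $X_{s,n}=\mu(s/n)+\epsilon_{s,n}$; the intercept and $\mu$-contributions are $O_{\mathbb{P}}(r_n^\circ)=o_{\mathbb{P}}(1)$ by part (a), while for $\sum_\ell(\hat a_\ell-a_\ell)\epsilon_{m+1-\ell,n}$ I would use $|\,\cdot\,|\le\|\hat{\mathbf a}_m^*-\mathbf a_m^*\|_1\max_{s\le m}|\epsilon_{s,n}|$, with $\max_{s\le m}|\epsilon_{s,n}|=O_{\mathbb{P}}(n^{1/q})$ from the $q$-th moment bound and $\|\hat{\mathbf a}_m^*-\mathbf a_m^*\|_1=O_{\mathbb{P}}(r_n)$, so that this term is $O_{\mathbb{P}}(n^{1/q}r_n)=o_{\mathbb{P}}(1)$ by (P1). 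Dividing by $\sigma(\tfrac{m+1}{n})$ leaves $\varepsilon_{m+1}+o_{\mathbb{P}}(1)$, and since $\varepsilon_{m+1}\eqd\varepsilon_1$, Slutsky's theorem yields \eqref{th3stat}.

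I anticipate two main obstacles. First, the tail bound in the intercept step requires geometric decay not only of the true prediction coefficients but of their estimates $\hat a_{\ell,n}$ at lags beyond $l_n$; establishing this rests on the off-diagonal decay of $\hat\Sigma_{n,m}^{-1}$, which is the most delicate ingredient and the source of the $\sqrt n\,\chi^{l_n}$ term. Second, in part (b) the estimator $\hat{\mathbf a}_m$ and the regressors $\mathbf X_{m,n}$ are built from the same data, so $\sum_\ell(\hat a_\ell-a_\ell)\epsilon_{m+1-\ell,n}$ cannot be handled by independence; the $\ell_1$--$\ell_\infty$ split, which couples the concentration of the coefficient error on low-order lags with the maximal growth $n^{1/q}$ of the errors, is precisely where assumption (P1) becomes indispensable.
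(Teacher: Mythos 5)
Your overall architecture matches the paper's: normal equations plus the plug-in $\hat\Sigma_{n,m}^{-1}\boldsymbol{\hat\gamma}_{n}^{1:m}$ for the slope (the paper routes the same estimate through a three-term split involving the truncated vector $\boldsymbol\gamma_{m,l_n}$; your two-term split is equivalent), a three-way decomposition of the intercept, and for part (b) the AR($p$) recursion isolating $\sigma(\tfrac{m+1}{n})\varepsilon_{m+1}$ plus an estimation error killed by (P1) via an $\ell_1$--$\ell_\infty$ pairing. The slope bound and the broad lines of part (b) are sound.

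The step that fails is your tail bound in the intercept. You split the lag range at $l_n$ and claim $(\sum_{\ell>l_n}(\hat a_\ell-a_\ell)^2)^{1/2}=O_{\mathbb{P}}(\chi^{l_n})$ ``by the geometric decay of the prediction coefficients and their estimates''. For the estimates this is false at lags just beyond $l_n$: the available decay for the inverse of the $l_n$-banded matrix $\hat\Sigma_{n,m}$ (Proposition 2.2 of \cite{demko1984decay}, which is exactly what the paper invokes) is $|\hat\Sigma_{n,m}^{-1}(i,j)|\leq M_0Q_0^{2|i-j|/l_n}$, so $\hat a_{\ell}$ for $\ell\in(l_n,2l_n]$ is of order $l_nQ_0^{O(1)}$ --- not $\chi^{l_n}$. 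The coefficients only become negligible at lags of order $l_n\log n$. The paper therefore splits at $hl_n\log n$: on the window of width $O(l_n\log n)$ it uses only the $\ell_2$-boundedness of the coefficient vector together with Cauchy--Schwarz (this, and not the uniform-over-design-points replacement of $\hat\mu^{1:m}$ by $\mu$ that you cite, is the actual source of the $l_n^{1/2}\log^{1/2}n$ factor in $r_n^\circ$), while the $\sqrt n\chi^{l_n}$ term arises from replacing $\mathbf a_m^*$ by the solution $\mathbf a_{m,l_n}^*$ of the banded system --- an $O(\chi^{l_n})$ perturbation in $\ell_2$ paired with the vector $(\hat\mu^{1:m}(s/n))_{s\leq m}$ of Euclidean norm $O(\sqrt n)$ --- not from the tail of $\hat{\mathbf a}_m^*$. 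The same localization is what must justify your $\|\hat{\mathbf a}_m^*-\mathbf a_m^*\|_1=O_{\mathbb{P}}(r_n)$ in part (b): part (a) only delivers the $\ell_2$ rate, and the naive conversion costs $\sqrt m$. Finally, your assertion $\sum_s|a_{m+1-s,n}|=O(1)$ for the second intercept term is not established anywhere (the paper proves only $\ell_2$-boundedness plus the windowed polynomial decay) and should be replaced by the same windowed Cauchy--Schwarz argument.
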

   
The rate $r_n^\circ$  in  \eqref{rno}  results from  convergence rate of the nonparmetric estimate of  the 
time-varying mean and does not appear if the  trend is not estimated because it  is known to be $0$.
  Conditions (P2) and (P3) can be verified by checking the coefficients of the MA representation of the locally stationary AR process \eqref{LocAR}. They assure that for any $i,j$, the process  $\{\E(G(t,\FF_i)G(s,\FF_j))\}_{t,s\in [0,1]}$ is sufficiently smooth on $[0,1]\times [0,1]$.

      \begin{remark} \label{posdef}
   {\rm
    Similar arguments as given in the proof of  Theorem \ref{thm2} show  that the  estimator $\hat \Sigma_{n,m}$ is   positive definite if the sample size is sufficiently large.
   However, for finite sample sizes the matrix  $\hat \Sigma_{n,m}$ can be singular. As the prediction in   \eqref{hol6}
   requires a  non-singular   sample covariance matrix we propose  in applications to  replace
   the estimator $\hat \Sigma_{n,m}$   by a  a positive definite estimator, say  $\hat \Sigma^{pd}_{n,m}$, which is defined as follows.
If $\hat  \Sigma_{n,m}=U_{n,m}V_{n,m}U_{n,m}^{\top}$   is the  spectral decomposition of $\hat \Sigma_{n,m}$  and $V_{n,m} = \mbox{diag} (v_1,\ldots, v_{m})$ is the  diagonal matrix  containing the
corresponding eigenvalues,
we define 
 \begin{equation}\label{Oct4.8}
  \hat \Sigma^{pd}_{n,m}:=U_{n,m}V^{+}_{n,m}U_{n,m}^{\top}
  \end{equation}
  where   $V^{+}_{n,m}$  is a diagonal matrix with its $i$th diagonal element given by
   $$
   v_{i}^{+} = \max \Big  \{v_i, \frac{10 \int_{0}^{\frac{m}{n}} \hat \gamma^{1:m}_0(t)dt}{m^\beta}\Big \}~,~i=1,\ldots. m
   $$
    for some $\beta>0$.  As a rule of thumb, we choose $\beta=0.5$ because for this choice       $\rho(\hat \Sigma_{n,m}^{pd}-\hat \Sigma_{n,m})=O(n^{-\beta}) =O(r_n)$.
This type of modification  has been also advocated by \cite{mcmurry2010banded} and \cite{mcmurry2015high} for stationary time series.
Using similar argument  as in the   proof of Theorem \ref{thm2} of this paper and in the proof of Theorem 3 of \cite{mcmurry2010banded},
it can be shown that $\|\hat \Sigma^{pd}_{n,m}-\Sigma_{n,m}\|_{q/2}=O(r_n)$.
Now  the arguments given in the proof of Corollary 1 of \cite{wu2009banding}
yield an analogue of Corollary \ref{Corol2}, that is
$$
\rho((\hat \Sigma^{pd}_{n,m})^{-1}-\Sigma_{n,m}^{-1})= O_{\mathbb{P}}(r_n).
$$
A careful  inspection   of the proof of Theorem \ref{Thm4} finally shows that  its assertion remains valid, if $\hat \Sigma_{n,m}$  in \eqref{hatSigmanm} is replaced by $\hat \Sigma^{pd}_{n,m}$.
}
\end{remark}

   	\section{Implementation and numerical results}
   	\label{sec5}
   	\def\theequation{5.\arabic{equation}}
\setcounter{equation}{0}

   	To implement our method we need to choose several tuning parameters: the bandwidths $\tau_n$ and   $b_n$ for the local linear estimators of the trend $\mu$ and auto-covariance
function  $\gamma_k$  and the width  $l_n$ of the banded estimator of the covariance matrix $\Sigma_n$. For choosing $\tau_n$, we recommend the Generalized Cross Validation (GCV) method
proposed  in \cite{zhou2010simultaneous}.

    More precisely, let $\hat \mu^{1:m}(\cdot,\tau), 1\leq i\leq m$ be the local linear estimate of the mean trend defined in \eqref{localnew} using bandwidth $\tau$, then  we choose $\tau_n$ as 
    \begin{align*} 
   	\tau_n=\argmin_\tau \frac{n^{-1}\sum_{i=1}^m ( X_{i,n}-\hat \mu^{1:m}(i/n,\tau) )^2 }{(1-\sum_{i=1}^m(T^{1:m}_{\tau,ii})/n)^2},
   	\end{align*}
	where  $T^{1:m}_{\tau,ii}$ is the $i_{th}$ diagonal entry of the matrix
   	$$
   	J^{1:m}_0 \big ( (X^{1:m}(i/n))^\top W^{1:m}_\tau(i/n)X^{1:m}(i/n) \big )^{-1}(X^{1:m}
   	(i/n))^\top W^{1:m}_\tau(i/n),
   	$$
   	$J^{1:m}_0$ and    $X^{1:m}(i/n)$
  are  	 $m\times 2$  matrices defined by
  \begin{eqnarray*}
  J^{1:m}_0 &=&
  \left( \begin{matrix}
1& 1&  \ldots & 1 \\ 0& 0& \ldots & 0
  \end{matrix} \right)^\top ~,~~
  X^{1:m}(i/n) =
  \left( \begin{matrix}
1& 1&  \ldots & 1 \\ \frac{1-i}n  & \frac{2-i}n &  \ldots & \frac{m-i}{n}
  \end{matrix} \right)^\top,
  \end{eqnarray*}
  respectively, and  $W_\tau(x)$ is an $m \times m$ diagonal matrix with elements
    $\big\{ K \big (\frac{x-s/n}{\tau} \big ) \big \}_{s=1, \ldots  m}$.
The bandwidth $b_n$ for the estimation of the auto-covariance function $\gamma_k$ in \eqref{gamma} is defined similarly.
For example, if $k$ is even,  we choose $b_n$ as
\begin{align} \label{bnselect}
b_n=\argmin_c \frac{n^{-1}\sum_{i=1}^m ( \hat \epsilon^{1:m}_{i-k/2,n}\hat \epsilon^{1:m}_{i+k/2,n}-\hat  \gamma_k^{1:m}(i/n,c) )^2 }{(1-\sum_{i=1}^m(  T^{1:m}_{c,ii})/n)^2},
   	\end{align}
   	where $\hat {\gamma}_k^{1:m}(i/n,c)$ is the local linear estimator with bandwidth $c$ defined  as  in  \eqref{May5-35} using $m$ observations 
	and    $T^{1:m}_{c,ii}$ is defined as  in the previous paragraph.
   	
To motivate  the choice of  the width  $l_n$ in the  banded estimator of the covariance matrix, note that
   	\begin{align}
   	 \sqrt{n}\Big(\frac{1}{n}\sum_{i=1}^{m\wedge (n-k)} \epsilon_{i,n}\epsilon_{i+k,n}-\int_0^{\frac{m}{n}\wedge 1}\gamma_k(t)dt\Big)\Rightarrow {\cal N} (0,\tilde \sigma^2_k),
   	\end{align}
	[see  Section 4.3 in \cite{zhang2012inference}],
   	where $\tilde \sigma_k^2=\int_0^{\frac{m}{n}\wedge \frac{n-k}{n}}g^2(t)dt$, and the function $g^2$ is the long-run variance of the locally stationary
	process $\{\epsilon_{i,n}\epsilon_{i+k,n}\}_{i=1}^{n-k}$. For its estimation we use a statistic proposed by \cite{dette2018change}, which is defined as follows.
Consider the partial sum of lag $k$
$$
^kS^{1:m}_{r_0,r_1}=\sum_{i=r_0}^{r_1}\hat \epsilon^{1:m}_{i,n}\hat \epsilon^{1:m}_{i+k,n},
$$
where we use the notation
$\hat \epsilon^{1:m}_{i,n}=0$ if the index $i$ satisfies $i<1$ or $i>m$. For an  integer $b\geq 2$  we introduce the quantities
$$
^k\Delta^{1:m}_{j,b}=\frac{^k S^{1:m}_{j-b+1,j}- {^k S^{1:m}_{j+1,j+b}}}{b}.
$$
 Finally, we define for $t\in[b/n,(m-b)/n]$
\begin{align*} 
\hat g^2(t) =\sum_{j=1}^n\frac{b {(^k\Delta_{j,b}^{1:m})^2}}{2}\omega(t,j),
\end{align*} 
where
$$
\omega(t,i)=K\Big(\frac{i/n-t}{b_n}\Big){\Big /}\sum_{ {i}=1}^nK\Big(\frac{i/n-t}{b_n}\Big)
$$
and the bandwidth $b_n$ is given by \eqref{bnselect} with $\hat \epsilon^{1:m}_{i-k/2,n}\hat \epsilon^{1:m}_{i+k/2,n}$ there replaced by $\hat \epsilon^{1:m}_{i,n}\hat \epsilon^{1:m}_{i+k,n}$.
For $t\in[0,b/n)$ and   $t\in((m-b)/n,m/n]$ we define
$\hat g^2 (t)=\hat g^2(b/n)$  and
$\hat g^2 (t)=\hat g^2 ((m-b)/n)$, respectively.
	 Finally, we propose
	\begin{align}\label{criteria-March-2019}
   	 l_n=\max \Big\{l\in [l_0,l_1] ~\Big | ~ n^{-1/2}|\sum_{i=1}^n\hat \epsilon^{1:m}_{i,n}\hat \epsilon^{1:m}_{i+l,n}|\geq \kappa({0.01}) \hat \sigma_l\Big\},
   	\end{align} 	
	as a data-driven choice of the width $l_n$, where  $ \kappa(\alpha)$ is the
			  $\frac{1+(1-\alpha)^{1/(l_1-l_0+1)}}{2}$-quantile of the standard normal distribution
	and  $l_{0}$ and $l_1$ are  constants (if the set  $\big\{ n^{-1/2}|\sum_{i=1}^n\hat \epsilon^{1:m}_{i,n}\hat \epsilon^{1:m}_{i+l,n}|\geq \kappa({0.01}) \hat \sigma_l, 1_0\leq l\leq l_1\big\}$ is empty
	we define $l_n=l_0-1$).

   	\subsection{Covariance estimation}
   	
   	In this section we investigate the finite  sample  properties of the estimators \eqref{Mean-Corrected}  and \eqref{locest} for the covariance matrix $\Sigma_n$
	of a locally stationary process, where we consider
   	\begin{eqnarray}
   	\label{(I)}  \mu(t)&=&2\sin 2\pi(t) ,\\
   	\label{(II)} \mu(t)&=&2-8(t-0.5)^2  ,\\
 	\label{(III)} \mu&=&0 ~,
   	\end{eqnarray}
as  mean functions. Recalling the notation $\FF_i = ( \ldots , \varepsilon_{i-1},\varepsilon_i )   $  we investigate  four different distributions for the errors    in model \eqref{1.1}:
  \begin{description}
	\item (a)  $\{ \epsilon_{i,n}: i=1,\ldots,n\}$ is a stationary     $AR(0.3)$ process with independent  standard normal distributed innovations.
	\item (b)
	$\epsilon_{i,n}=0.8 G(i/n,\FF_i)$ where
$$
 G(t,\FF_i)=0.7\sin(2\pi t) G(t,\FF_i)+\varepsilon_i
 $$
 and $\{\varepsilon_i\}_{i\in \mathbb Z}$ is a sequence of  independent, standardized ($\mathbb E [\varepsilon_i] =0$, Var($\varepsilon_i)=1$) $t$-distributed random variables  with six degrees of freedom.
	\item (c) $\epsilon_{i,n}= G(i/n,\FF_i)$ where
$$
G(t,\FF_i)=\frac{1}{6}(\exp(4(t-0.5)^2)+1)\varepsilon_i+0.6(|\varepsilon_{i-1}|-\E(|\varepsilon_{i-1}|))
$$
 and $\{\varepsilon_i\}_{i\in \mathbb Z}$ is a sequence of  independent standard normal  distributed random variables.
	\item (d) $\epsilon_{i,n}= G(i/n,\FF_i)$ where
$$
G(t,\FF_i)=\frac{1}{4}(\cos(\pi t)+2)(\varepsilon_i+0.9\varepsilon_{i-1}-0.6\varepsilon_{i-2})
$$
	and  $\{\varepsilon_i\}_{i\in \mathbb Z}$  is a sequence of standardized $(\mathbb{E}[\varepsilon_i]=0$, Var$(\varepsilon_i)=1$) 
	independent chi-square distributed random variables with five degrees of freedom.
\end{description}
Note that model (a) defines a stationary process and model (b) defines a locally stationary AR(1) process. Model (c) defines  a nonlinear $tvMA(1)$ process.
Since the innovations $\varepsilon_i$ in model (c) have a symmetric distribution, the covariance matrix of model (c) is diagonal.
Model (d) defines a $tvMA(2)$ process, where  only the entries
in the  diagonal and the first two off diagonals of the covariance matrix do not vanish.

  \begin{table}[htbp]
  	\centering
	\begin{footnotesize}
  	\caption{ \it
  	 Simulated mean squared error $\rho (\hat \Sigma_n - \Sigma_n)$ for the estimators  \eqref{locest} and \eqref{Mean-Corrected} in  model \eqref{1.1}
  	with different mean functions and error processes (a) and (b). }
  	\begin{tabular}{c|c|cc|| cc}
  		&       & \multicolumn{2}{c}{Model (a)} & \multicolumn{2}{c}{Model (b)} \\
      \hline
  	$n$	& $\mu$      &  \eqref{locest} & \eqref{Mean-Corrected} & \eqref{locest} & \eqref{Mean-Corrected}\\
  		\hline
  	 & \eqref{(I)}& 0.952 (0.0104) & 0.637 (0.0105) & 5.034 (0.0311) & 5.532 (0.0083) \\
  		250	& \eqref{(II)} & 0.943 (0.0100)& 0.632 (0.0102) & 5.063 (0.0308) & 5.529 (0.0083) \\
  		& \eqref{(III)} & 0.770 (0.098) & 0.474 (0.0090) & 4.646 (0.0365) & 5.388 (0.0103) \\
  		\hline
  	 &\eqref{(I)} & 0.683 (0.0080) & 0.410 (0.0051) & 4.304 (0.0303) & 5.610 (0.0076) \\
  		500	& \eqref{(II)} & 0.672 (0.0078) & 0.421 (0.0053) & 4.370 (0.0291) & 5.595 (0.0081) \\
  		& \eqref{(III)}  & 0.609 (0.0073) & 0.346(0.0045) & 4.021 (0.0299) & 5.490(0.0096) \\	\hline
  	& \eqref{(I)} & 0.518 (0.0060) & 0.329 (0.0043) & 3.868 (0.0264) & 5.624 (0.0069) \\
  	1000 		&  \eqref{(II)} & 0.535 (0.0062) & 0.322 (0.0043) & 3.881 (0.0265) & 5.632 (0.0070) \\
  		&  \eqref{(III)}  & 0.484 (0.0060) & 0.282 (0.0042) & 3.760 (0.0274) & 5.563 (0.0077) \\
  		\hline
  	\end{tabular}%
  	\label{Estimate1}%
	\end{footnotesize}
  \end{table}%

We examine the estimator for covariance matrix $\Sigma_n$ for sample sizes $n=250$, $500$ and $1000$ using $1000$ simulation runs.
For the estimation of the width  $\l_{n}$ of the band  in \eqref{hneu2} we use  \eqref{criteria-March-2019}  with   $l_0=1$, $l_1=6$.
 In each simulation run the  tuning parameters ($\tau_n$, $b_n$) are determined as described at the beginning of this section. In Table \ref{Estimate1} and \ref{Estimate2}
we display  the simulated mean squared error of the spectral loss $\rho (\hat \Sigma_n - \Sigma_n)$ for different  estimators $\hat \Sigma_n$, where different mean functions  and  error processes in model \eqref{1.1} are considered. In particular we  compare the mean corrected estimator \eqref{locest} for non-stationary error processes with the mean corrected estimator \eqref{Mean-Corrected}
which assumes a stationary error process.
The numbers in brackets show the standard error of the estimates. We observe that in the stationary  model (a) the accuracy of both estimators improve with increasing sample size. Moreover, the estimator \eqref{Mean-Corrected} outperforms \eqref{locest} because this estimator is constructed for stationary processes. On the other hand, for the dependence structures (b) - (d) corresponding to locally stationary processes       the stationary method in \eqref{Mean-Corrected} is not consistent and the estimator \eqref{locest} shows a substantially superior behaviour.

  \begin{table}[htbp]
  	\centering
	\begin{footnotesize}
  	\caption{\it Simulated mean squared error $\rho(\hat \Sigma_n - \Sigma_n)$ for the estimators  \eqref{locest} and \eqref{Mean-Corrected} in  model \eqref{1.1}
  	with different mean functions and error processes (c) and (d) }
  	\begin{tabular}{c|c|cc|| cc}
  		&       & \multicolumn{2}{c}{Model (c)} & \multicolumn{2}{c}{Model (d)} \\
  		\hline
  	$n$	& $\mu$      &  \eqref{locest} & \eqref{Mean-Corrected} & \eqref{locest} & \eqref{Mean-Corrected}\\
  		\hline
  		250 & \eqref{(I)}  & 0.647 (0.0114) & 1.059 (0.0022) & 0.767 (0.0113) & 1.024 (0.0071) \\
  		& \eqref{(II)} & 0.623 (0.0116) & 1.062 (0.0023) & 0.773 (0.011) & 1.037 (0.0071) \\
  		& \eqref{(III)}& 0.557 (0.0109) & 1.045 (0.0023) &0.745 (0.0109) & 1.062 (0.0073) \\
  		\hline
  		500 &\eqref{(I)}& 0.482 (0.0094) & 1.045 (0.0017) & 0.558 (0.010) & 0.963 (0.0045) \\
  		& \eqref{(II)} & 0.478 (0.0094) & 1.043 (0.0016) & 0.569 (0.010) & 0.960 (0.0044) \\
  		& \eqref{(III)} & 0.450 (0.0090) & 1.037 (0.0016) & 0.564 (0.0098) & 0.963 (0.0044) \\
  		\hline
  		1000 & \eqref{(I)}& 0.357 (0.0069) & 1.037 (0.0012) & 0.426 (0.0082) & 0.964 (0.0030) \\
  		& \eqref{(II)} & 0.374 (0.0071) & 1.040 (0.0012) & 0.418 (0.0078) & 0.959 (0.0031) \\
  		& \eqref{(III)} & 0.360 (0.0074) & 1.036(0.0012) & 0.405 (0.0079) & 0.960 (0.0030) \\
  		\hline
  	\end{tabular}
  	\label{Estimate2}%
	\end{footnotesize}
  \end{table}%

  \subsection{Prediction}

  To illustrate the finite sample properties of the estimator proposed in Section \ref{sec4} for prediction
  we examine the mean trend \eqref{(I)}.
 As error process we consider a locally stationary AR(6) model defined by
		
  \begin{equation}
  \label{error_pred}
\prod_{s=1}^6(1-a_s(t)\mathcal B)G(t,\FF_i)=\sigma(t)\varepsilon_i,
 \end{equation}
  where the functions $a_1(t), \ldots , a_6(t)$ are given by
  \begin{align*}
  a_1(t) & =0.6\sin(2\pi( t-0.05)),~
  a_2(t)=0.3\cos^2(3\pi t),~
  a_3(t)=((\exp(t-0.6))^2)/3-0.4,\\
  a_4(t) & =-0.4\sin (6\pi t)-0.1,~a_5(t)=(t-0.3)^2-0.2,~a_6(t)=0.2,
    \end{align*}
 $  \sigma(t)=(1+0.5\sin 2\pi t)^{0.5}$
   and  $\mathcal B$ is the lag operator on the filter $\FF_i$, i.e., $\mathcal B G(t,\FF_i)=G(t,\FF_{i-1}).$ We consider a  standard normal   as well as a $\chi^2(6)$
   distribution for the  errors $\varepsilon_i$ (centered and standardized such that $\mathbb{E}[\varepsilon_i]=0$ Var$(\varepsilon_i)=1$)
  and examine the mean squared error  of the prediction for  sample sizes  $n=250, n=500,n=1000$. 
  We also compare the new predictor with the methods in  \cite{roueff2018prediction}, \cite{kley2019predictive} and \cite{giraud2015aggregation}
  which were theoretically investigated for centered data. In a first step we  used these methods  with  the residuals $\hat \epsilon^{1:m}_{i,n}$ to obtain a prediction for the de-trended series. 
 In a second step  we add to this estimate  the value $\hat \mu^{{1:m} }(m/n)$  to obtain the final prediction of $X_{m+1,n}$.   
  Notice that these authors use time-varying AR$(d)$ processes to approximate the time series for prediction without knowing $d$. Since the error process \eqref{error_pred} is a locally AR$(6)$ process, we investigate the performance of the methods proposed by \cite{roueff2018prediction}, \cite{kley2019predictive} and \cite{giraud2015aggregation}
  for $d=3$, $d=6$ and $d=9$ (note that in the predictor of \cite{kley2019predictive}  $d$ denotes the maximum lag that their algorithm allows). These cases represent  the situation of underestimation, correct-estimation and overestimation of $d$.   Note that   in the cited references there are no rules how   to select $d$.   Moreover, for the method proposed by \cite{kley2019predictive} we choose the parameter $\delta$ in their procedure as $0.05$, as a small parameter $\delta$ prefers the choices of a time-varying model to a stationary model.

\begin{table}[ht]
	\centering
	\begin{footnotesize}
		\caption{\it Simulated mean squared error of different predictors in model \eqref{error_pred} with
		standard normal distributed  $\varepsilon_i$. The numbers in brackets show the standard error and the index
		$*$ represents the predictor with the best best performance.}
	\begin{tabular}{c|cccc||ccc}
		\hline
Method	&\multicolumn{4}{c||}{$t_{pred}=0.5$ }&\multicolumn{3}{c}{$t_{pred}=1$}\\	
	 \hline
	 & lag  & $n=250$ & $n=500$ & $n=1000$ &$n=250$ & $n=500$ & $n=1000$ \\
	\hline
\eqref{hol6}
	& & 1.250& 1.070* &   1.033* & 1.283* &  1.170 & 1.077*\\
	&-& (0.0570)&  (0.0530) & (0.0464) & (0.0596) &   (0.0511) & (0.0464)
	  \\
	\hline
	\multirow{6}{*}{R-S}&{$d=3$} & 1.286& 1.126 & 1.057 & 1.342 & 1.148* & 1.137 \\
&	 & (0.0523) & (0.0499) & (0.0466) & (0.0577) & (0.0589) & (0.0490) \\ 
& 	{$d=6$} & 1.427 & 1.250 & 1.263 & 1.494 & 1.288 & 1.161 \\
&  & (0.0700) & (0.0510) & (0.0532) & (0.0905) & (0.0536) & (0.0518) \\ 
 &	{$d=9$}& 1.895 & 1.297 & 1.209 & 32.286& 1.779 & 1.125 \\
 & & (0.1667) & (0.0566) & (0.0514) & (20.2729) & (0.0630) & (0.0542)\\
\hline
	\multirow{6}{*}{G-R-S}&{$d=3$} & 1.241* & 1.244 & 1.319 & 2.729 & 3.262 & 3.524 \\
&	 & (0.0623) & (0.0607) & (0.0633) & (0.1201) &( 0.1676) &( 0.2425) \\ 
&	{$d=6$} & 1.251 & 1.241 & 1.122 & 2.385 & 2.868 & 2.933 \\
& & (0.0572) & (0.0537) & (0.0552) & (0.1065) & (0.1280) & (0.1378) \\
  &	{$d=9$}& 1.323 & 1.166 & 1.170 & 2.536 & 2.461 & 2.441 \\
&   & (0.0625) &(0.0548) & 0.0500) &(0.1105) & (0.1169)&(0.1311)\\
\hline
	\multirow{6}{*}{K-P-F}&	{$d=3$} & 1.314 & 1.182 & 1.126 & 1.346 & 1.329 & 1.168 \\
	& &(0.0628) & (0.0538) & (0.0484) & (0.0674) &(0.0652) &(0.0517)\\ 
&{$d=6$} & 1.336 & 1.155 & 1.133 & 1.448 & 1.340 & 1.270 \\
&   &(0.0565) & (0.0586) & (0.0474) &(0.0726) &(0.0612) & (0.0503) \\ 
& {$d=9$} & 1.343 & 1.357 & 1.215 & 1.459 & 1.279 & 1.255\\
 &&(0.0598) & (0.0480 )& (0.0509) &(0.0588) &(0.0659) &(0.0581)\\ 
		\hline
	\end{tabular}\label{Table-Normal-Error}
	\end{footnotesize}
\end{table}

\begin{table}[ht]
	\centering
\begin{footnotesize}
	\caption{\it  Simulated mean squared error of different predictors in  model \eqref{error_pred} with (standardized) chi-squared $\varepsilon_i$.
	The numbers in brackets show the standard error and the index
		$*$ represents the predictor with the best best performance.}
	\begin{tabular}{c|cccc||ccc}
		\hline
		&\multicolumn{4}{c||}{$t_{pred}=0.5$ }&\multicolumn{3}{c}{$t_{pred}=1$}\\
		
		\hline
		Method & lag  & $n=250$ & $n=500$ & $n=1000$ &$n=250$ & $n=500$ & $n=1000$ \\
		\hline
		\eqref{hol6}  &-  &  1.201& 1.123 &  1.072*	 &  1.294* &  1.116*& 1.088
		\\            &   &  (0.0577)&  (0.0722)   &   (0.0624)      &  (0.0871)      &      (0.0554)    & (0.0608)
		\\
		\hline
		\multirow{3}{*}{R-S}&{$d=3$}  & 1.276 & 1.032* & 1.100 & 1.307 & 1.196 & 1.061* \\
	&	 & (0.0645) & (0.0757) &(0.0632) & (0.0718) & (0.0794) & (0.0696) \\
	& {$d=6$} & 4.282 & 1.263 & 1.107 & 1.775 & 1.298 & 1.160 \\
&	 & (0.0645) & (0.0787) &(0.0720) &(0.0833) &(0.0868) &(0.0627) \\ 
 &{$d=9$} & 1.726 & 1.347 & 1.159 & 50.111 & 4.181 & 1.210 \\
&  & (0.1022) & (0.0573) &(0.0567) & (24.4556) & (0.0804) & (0.0861) \\ 
\hline
		\multirow{3}{*}{G-R-S}&{$d=3$}  & 1.366 & 1.376 & 1.346 & 2.646 & 3.185 & 3.162 \\
		& & (0.0885) & (0.1016) & (0.0784) & (0.1748) &(0.2806) &(0.3451) \\ 
	&$d=6$  & 1.207 & 1.302 & 1.274 & 2.420 & 2.553 & 2.844 \\
&	 & (0.0651) & (0.0780) & (0.0632) &(0.1217) &(0.2104) & (0.1783) \\ 
		&$d=9$  & 1.263 & 1.299 & 1.182 & 2.338 & 2.722 & 2.721 \\
&		& (0.0618) & (0.0597) & (0.0683) &(0.1440) &(0.2321) & (0.1664) \\ 
\hline
		\multirow{3}{*}{K-P-F}&$d=3$  & 1.120* & 1.101 & 1.176 & 1.372 & 1.320 & 1.061* \\
&		& (0.0668) & (0.0508) & (0.0611) & (0.0731) & (0.0753) & (0.0697) \\ 
	&$d=6$& 1.235 & 1.163 & 1.107 & 1.379 & 1.195 & 1.278 \\
&	 & (0.0644) &(0.0621) & (0.0715) & (0.0946) & (0.0589) & (0.0663) \\ 
		& $d=9$ & 1.134 & 1.283 & 1.202 & 1.317 & 1.293 & 1.132\\
&		 &(0.0712) & (0.0710) &(0.0602) & (0.0793) &(0.0801) &(0.0708)\\ 
		\hline
	\end{tabular}\label{Table-chi-square-Error}
	\end{footnotesize}
\end{table}


 In Table \ref{Table-Normal-Error}   and \ref{Table-chi-square-Error} we present the simulated mean squared error
  $$
  \mathbb{E}[(\hat X^{pred}_{m+1,n} - X_{m+1,n})^2]
  $$
  for the four different prediction methods and different distributions of the innovations.  The columns denoted by $t_{pred}=0.5$ and  $t_{pred}=1$ correspond to a prediction of  $X_{\lfloor n/2 \rfloor +1}$
 from on  $X_{1,1}, \ldots, X_{\lfloor n/2 \rfloor }$ and a prediction of $X_{n,n}$
from  $X_{1,1},\ldots, X_{n-1,n}$, respectively, where 
{we use $l_0= \lceil \log (m) \rceil $ and $l_1=5+\lceil \log (m) \rceil$ in  \eqref{criteria-March-2019}.} 
  The first row shows the simulated mean squared error of the prediction \eqref{hol6}. With increasing sample size this mean squared error approximates $1$.
  This corresponds to  our theoretical  result in Theorem \ref{Thm4}, because we have  for the  model under consideration  $\sigma(0.5)=\sigma(1)=1$.
    The rows denoted by R-S, G-R-S and K-P-F show the simulated mean squared error for predictors proposed by \cite{roueff2018prediction}, \cite{giraud2015aggregation} and \cite{kley2019predictive}, respectively, with different  time lags $d=3,6,9$. In general, the  non-stationary predictor  \eqref{hol6} performs better or similar as the alternative methods with different time lag $d$ in all scenarios.  Our simulation results also demonstrate that the performance of R-S, G-R-S and K-P-F  predictors depend sensitively on the choice  of $d$.   Finally, the large numbers in R-S predictor  is due to the singularity of estimated local covariance matrix. We expect that this can be corrected by using an eigenvalue corrected positive definite covariance matrix estimator similar to  \eqref{Oct4.8}.

  We also  examine the distribution of the prediction error as investigated in Theorem \ref{Thm4}.  For this purpose we show
  in  Figure  \ref{Error-Plot-2}    the  QQ plot of prediction errors of
  the predictors  \eqref{hol6}  for   standard normal distributed errors and  centered and standardized $\chi^2(6)$-distributed
  errors {in} model \eqref{error_pred}, respectively.   The model  is given by \eqref{error_pred} and the sample sizes is $n=1000$.
  These results confirm the theoretical findings in  Theorem  \ref{Thm4}.
\begin{figure}[t]
	\centering
	 \includegraphics[width=8cm,height=6cm]{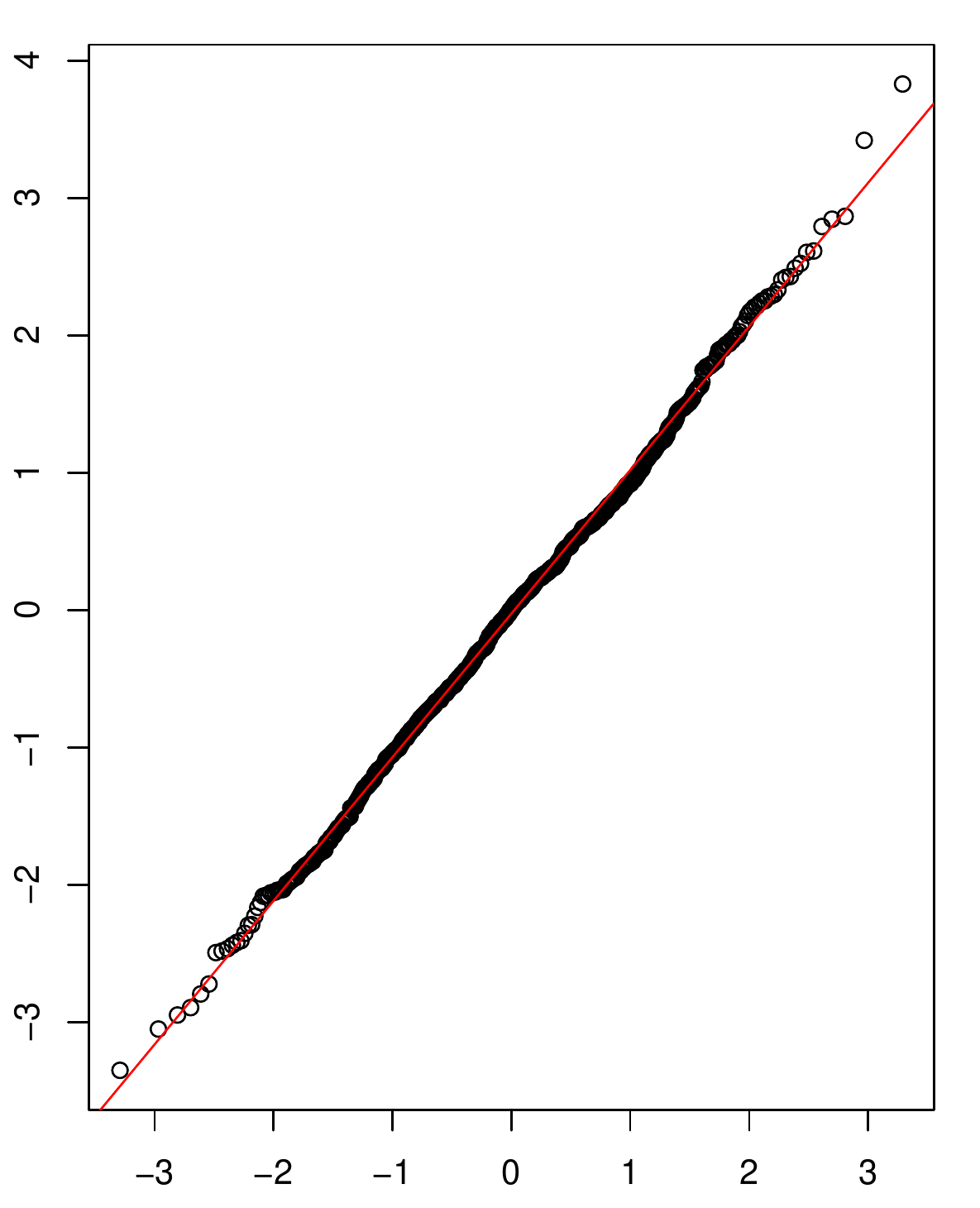}
 \includegraphics[width=8cm,height=6cm]{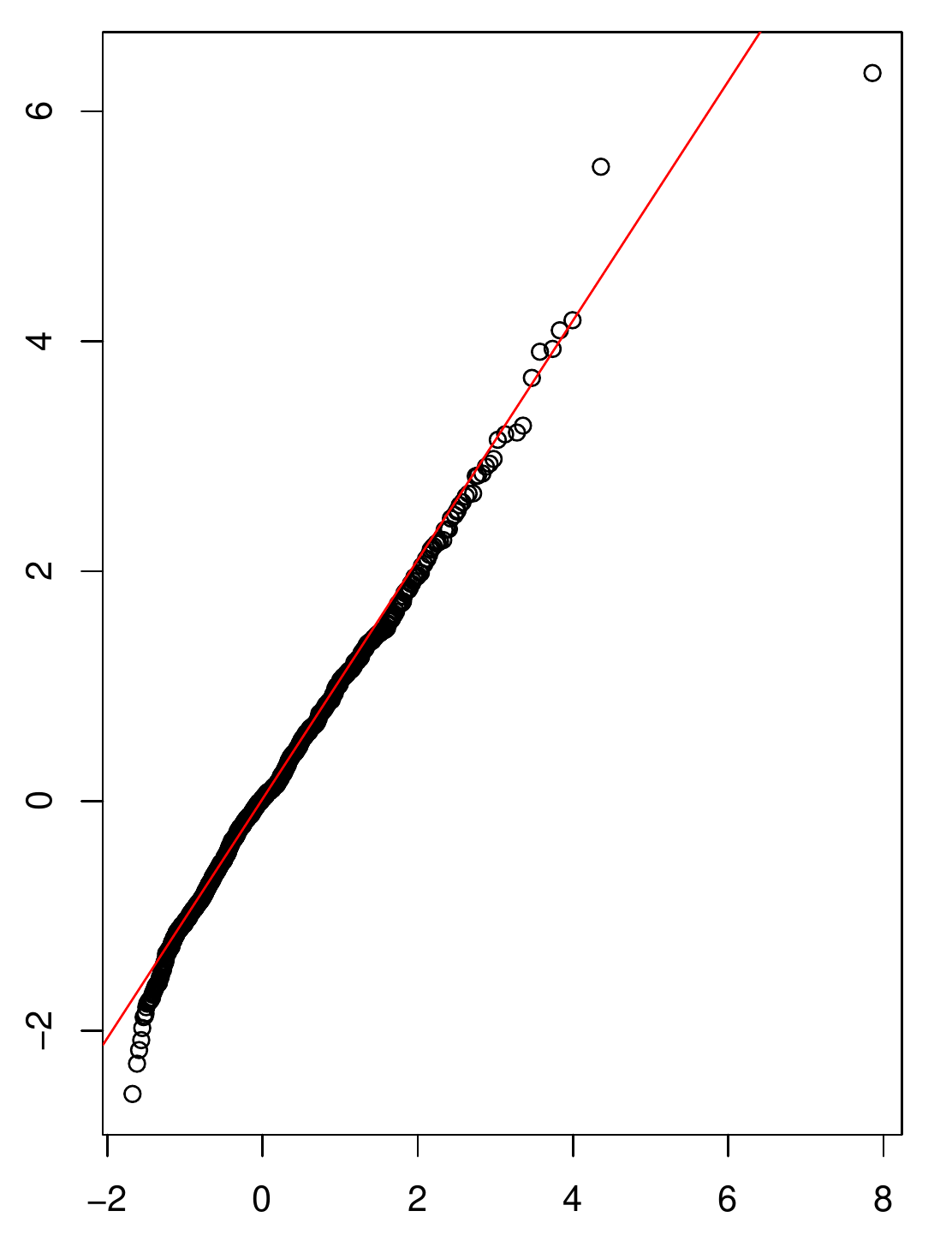}
	\vspace{-.4cm}
	\caption{\it QQ plots  of prediction errors. Left part: standard normal distributed errors. Right part: $(\mathcal{X}^2(6)-6)/\sqrt{12}$-distributed errors.}
	\label{Error-Plot-2}
\end{figure}

Finally, we compare the new predictor \eqref{hol6}
 with the methods proposed by  \cite{roueff2018prediction}, \cite{giraud2015aggregation} and \cite{kley2019predictive} in a  locally stationary MA(6) model
 defined by
 \begin{align}\label{error_pred-MA}
G(t,\FF_i)=\prod_{s=1}^6(1-a_s(t)\mathcal B)\sigma(t)\varepsilon_i,
  \end{align}
  where the time varying coefficients $a_1, \ldots a_{6}$  and the function $\sigma$ are the same as those defined in the locally stationary AR$(6)$ model \eqref{error_pred},
  the mean function is given by  \eqref{(I)} and  the random
  variables $\varepsilon_{i}$ are  independent  standard normal distributed.
The  results  are presented  in  Table  \ref{Table-MA-Error} and we observe similar properties as
in the  locally stationary  AR$(6)$ model \eqref{error_pred}.  A detailed discussion is omitted for the sake of brevity.

\begin{table}[ht]
	\centering
	\begin{footnotesize} 
	\caption{\it  Simulated mean squared error of different predictors with    MA(6) model \eqref{error_pred-MA}. 
	The numbers in brackets show the standard error and the index
		$*$ represents the predictor with the best best performance.}
	\begin{tabular}{c|cccc||ccc}
		\hline
		&\multicolumn{4}{c||}{$t_{pred}=0.5$ }&\multicolumn{3}{c}{$t_{pred}=1$}\\
		\hline
		Method & lag  & $n=250$ & $n=500$ & $n=1000$ &$n=250$ & $n=500$ & $n=1000$ \\
		\hline
	\eqref{hol6} 	&- &  1.187&1.090* &   1.083	 & 1.346* &  1.234* &  1.092*
		\\          &   &  (0.0504)&   (0.0509)   &    (0.0470)        & (0.0627)         &   (0.0554)      &   (0.0511) \\

		\hline
	\multirow{3}{*}{R-S}&$d=3$ & 1.222 & 1.152 & 1.144 & 1.505 & 1.287 & 1.102 \\
	& & (0.0571) & (0.0532) &(0.0503)& (0.0673) & (0.0580) & (0.0475) \\ 
	& $d=6$ & 1.331 & 1.137 & 1.228 & 1.869 & 1.405 & 1.266 \\
	& & (0.0569) & (0.0511)& (0.0519) & (0.0912) & (0.1037) &(0.0511) \\ 
	&$d=9$& 8.757 & 1.338 & 1.138 &  254.780& 2.128 & 1.247 \\
	& & (1.213) &(0.0596) & (0.0515) & (175.380) & (0.1643) &(0.0533) \\ 
	\hline
	\multirow{3}{*}{G-R-S}&$d=3$  & 1.232 & 1.255 & 1.296 & 2.462 & 2.484 & 2.042 \\
    & & (0.0557) & (0.0562) &(0.0642) & (0.1044) &(0.2468) & (0.1060) \\ 
	& $d=6$ & 1.167 & 1.257 & 1.035 & 2.169 & 1.868 & 1.793 \\
	& & (0.0544) &(0.0539) &(0.0492) &(0.0973) & (0.0839) & (0.0849) \\ 
	&$d=9$& 1.128* & 1.178 & 1.087 & 1.985 & 2.064 & 1.949\\
	& & (0.0610) & (0.0604) & (0.0543) &(0.0943) & (0.0925) &(0.0882) \\ 
	\hline
	\multirow{3}{*}{K-P-F}&$d=3$ & 1.286 & 1.280 & 1.051* & 1.571 & 1.404 & 1.292 \\
  &	&(0.0497) & (0.0599) &(0.0456) & (0.0677) & (0.0595) & (0.0545) \\ 
	&$ d=6$ & 1.177 & 1.179 & 1.244 & 1.523 & 1.321 & 1.288 \\
	& & (0.0595) & (0.0538) &(0.0516) & (0.0751) &(0.0669) & (0.0548) \\ 
	&$d=9$& 1.296 & 1.238 & 1.158 & 1.649 & 1.449 & 1.310  \\
&	& (0.0524) & (0.0511) & (0.0479) & (0.0724) &(0.0640) & (0.0606) \\ 
	\hline
		\hline
	\end{tabular}\label{Table-MA-Error}
	\end{footnotesize}
\end{table}

\subsection{Market indices analysis}
In this section we apply our method to predict market indices. Let $p_t$ be the adjusted daily closing value at day $t$, then the log return $r_t$ is defined as
\begin{align*}
r_t=\log p_t-\log p_{t-1}.
\end{align*}
As pointed out by \cite{stuaricua2005nonstationarities},  the sign of $r_t$ is unpredictable. As a result, these authors proposed to model $r_t$ as
\begin{align}\label{Stan}
\log |r_t|=\mu(t)+\sigma(t)\epsilon_t
\end{align}
where $\mu $ and $\sigma $ are time varying functions and $\epsilon_t$ denotes a  zero-mean noise process.  \cite{stuaricua2005nonstationarities} used
 model \eqref{Stan} to study the non-stationarity of stock returns. In this section we apply  the new method  to predict  $y_t:=\log(|r_t|)$ for the  
 SP500, NASDAQ and Dow Jones Index. We consider data from Dec. $19$, $2016$ to Dec. $17$, $2019$. For SP500, NASDAQ and Dow Jones Index, we delete the log return of Jan. 10, 2017, Nov. 13, 2018 and  Nov. 12, 2019 respectively due to their negative infinity values. Therefore the  lengths of the 
 series are $752$.  We 
 use the  new  method to predict the market indices  at
 trading days between April. 8, 2019 and Dec. 17, 2019 for SP500 and NASDAQ and   at 
 trading days between April. 5, 2019 and Dec. 17, 2019 for Dow Jones Series, respectively, and  calculate
 the empirical mean squared error for these predictions. For the sake of comparison we  also apply  the methods of 
 \cite{roueff2018prediction} (R-S),  \cite{giraud2015aggregation}  (G-R-S) and \cite{kley2019predictive} (K-P-F) to the same series. As in the simulation, for fair comparison we perform those algorithms on non-parametrically de-trended data and use the outcome plus $\hat \mu((T-1)/T)$ as the prediction of indices at day $T$.  The corresponding results 
 are listed in Table \ref{DataAna}, where we use the different  lags $3,6,9$ in the procedures based on autoregressive fitting. We observe that the new  prediction 
   method  \eqref{hol6} shows the best performance 
 for all three market indices. For   NASDAQ index the method proposed by  \ \cite{kley2019predictive} with $d=9$ shows a similar performance. In general the parameter $d$ for 
 the prediction  method proposed by  \cite{roueff2018prediction}, \cite{giraud2015aggregation} and \cite{kley2019predictive} is difficult to select, while it has a complicated impact on the predictions when applying those approaches. In Figure \ref{market} we also plot the prediction  error of the different methods  for the three market indices. 
 The left panels display $\log|r_t|$, while the right panels show absolute prediction errors of the prediction 
\eqref{hol6}  and  of  the predictors proposed by \cite{roueff2018prediction} (R-S),  \cite{giraud2015aggregation}  (G-R-S) and \cite{kley2019predictive} (K-P-F) 
 for the corresponding parameter $d \in \{ 3,6,9\} $, which achieves the smallest mean squared error. 

\begin{table}[h]
    \centering
    	\begin{footnotesize}
    \caption{\it Empirical mean squared error of different predictors for SP500, NASDAQ and Dow Jones. The notation $*$ marks  the best method.}
    \begin{tabular}{c|cccc}

        \hline
        Method & lag  &SP500 &NASDAQ & Dow Jones \\
        \hline
        \eqref{hol6}
        &-  & 1.456*& 1.119* &    1.745*
        \\
        \hline
        \multirow{3}{*}{R-S}&d=3 &  1.535 &   1.130 &  1.747 \\
        & d=6 & 1.586 &1.142&  1.873 \\
        &d=9 &  1.607 &1.170 & 1.860 \\
        \hline
        \multirow{3}{*}{G-R-S}&d=3 &  1.817 &   1.826 & 2.054  \\
        &d=6 & 2.689 & 1.350 &  2.361 \\
        &d=9&  2.225 & 1.200 & 2.344\\
        \hline
        \multirow{3}{*}{K-P-F}&d=3 & 1.653 &  1.147 & 1.883\\
        &d=6 & 1.707 &  1.124& 1.938 \\
        & d=9 & 1.763 &1.119* & 1.932\\

        \hline
    \end{tabular}\label{DataAna}
    	\end{footnotesize}
\end{table}

\begin{figure}[H]
        \centering
\tikz[baseline]\draw [solid] (0,0.15) -- (0.6,0.15); (method \eqref{hol6});  \tikz[baseline]\draw [dotted] (0,0.15) -- (0.6,0.15); (R-S);\tikz[baseline]\draw [dashed] (0,0.15) -- (0.6,0.15); 
(G-R-S);\tikz[baseline]\draw[densely dashed] (0,0.15) -- (0.6,0.15); (K-P-F);  
              \vskip -1.cm
 \includegraphics[width=18cm,height=20cm]{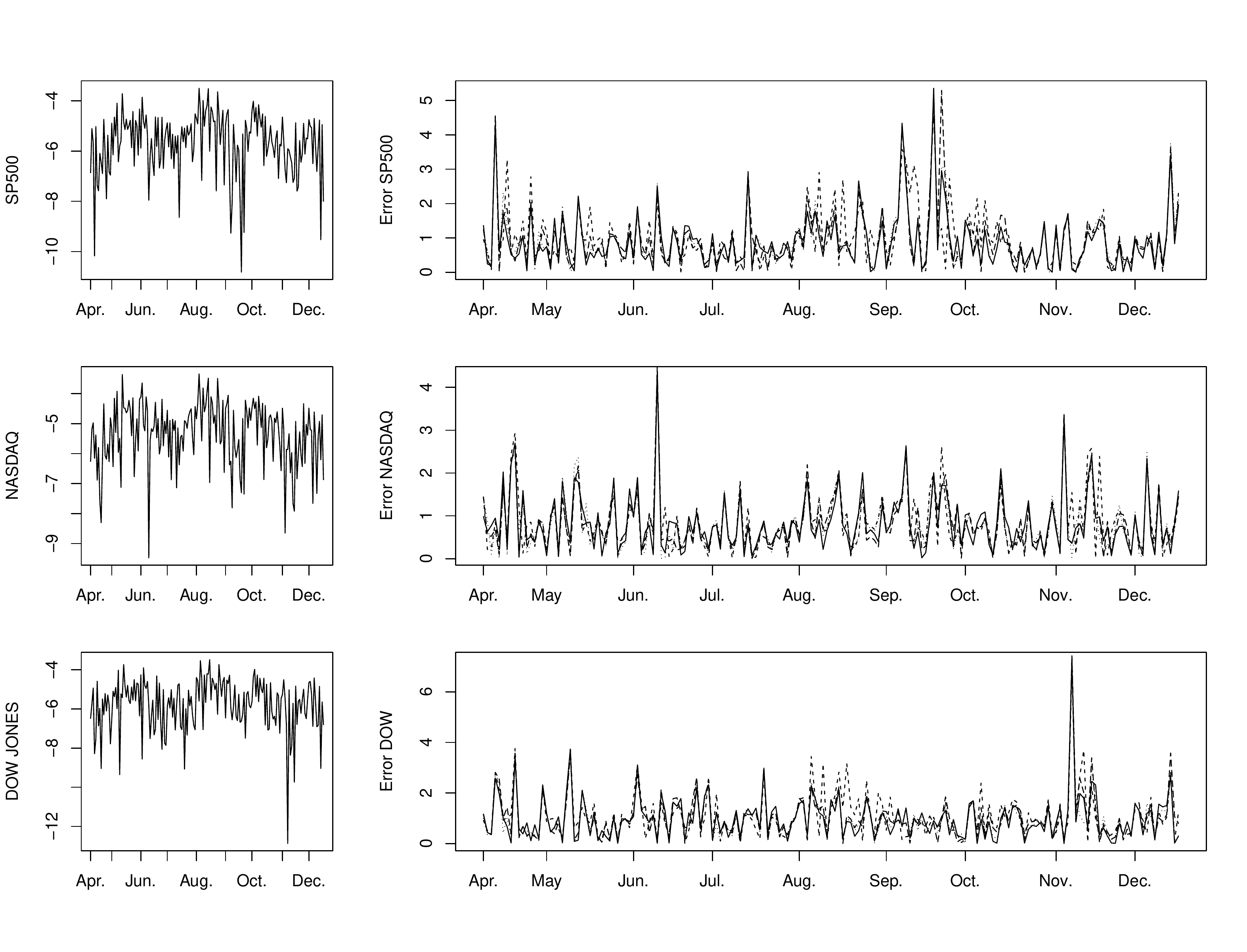} 
      \vskip -1.cm
    \caption{\it  Prediction of   different market indices (left panels).    Right Panel: the absolute prediction errors of the different methods}
  
    \label{market}
\end{figure}

\section{Appendix: Proofs} \label{sec6}

\def\theequation{6.\arabic{equation}}
\setcounter{equation}{0}

 In the proof, we shall use  $\pp_i(\cdot)=\E(\cdot|\FF_i)-\E(\cdot|\FF_{i-1})$ as the projection operator. Let $\epsilon_{i,n}=0$ and $\hat \epsilon_{i,n}=0$ for $i\leq 0$ or $i> n$ for convenience. For a $p-$dimensional real vector  $\mathbf v=( v_1,..., v_p)^\top$, we write $|\mathbf v|=(\sum_{i=1}^pv_i^2)^{1/2}$ for its euclidean norm,
 and write
$\|\mathbf v\|_q= \E\left(| \mathbf v |^q\right)^{1/q}$ if $\mathbf v$ is random.
 Let $M$ denote a sufficiently large constant which varies from line to line.  Write 
 $a\vee b=\max(a,b)$ and $a\wedge b=\min(a,b)$. For positive definite matrix $A$, define $\lambda_{max}(A)$ and 
$\lambda_{min}(A)$ be its largest and smallest eigenvalues, respectively.

\noindent
\subsection{Some auxiliary results}

In this section we provide several auxiliary results, which will be used in the proofs of the main statements.
The main result is Proposition \ref{May-5-prop1}, while  Proposition \ref{Propostiion_4}
  and  \ref{March16_prop5} are   used for a proof of
this statement.

\begin{proposition}\label{Propostiion_4}
If assumptions (L1)-(L3), (M1) hold, $n\tau_n^3\rightarrow \infty$ and $n\tau_n^6=o(1)$, and $\lf cn\rf\leq m\leq n$ for some constant $c, 0<c<1$,
	then the local linear estimate	in \eqref{locallinear} satisfies
		 \begin{align*}
	\sup_{t\in [0,1]}\|\hat \mu^{1:m}(t)-\mu(t)\|_q=O(\tau_n^2+(n\tau_n)^{-1/2}).
	\end{align*}
	\end{proposition}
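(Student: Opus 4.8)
The plan is to work from the closed-form weighted-least-squares representation of the estimator and to split the error $\hat\mu^{1:m}(t)-\mu(t)$ into a deterministic bias and a stochastic term, bounding each \emph{uniformly in $t$} with a constant independent of $t$ and $n$ (note that the claim is $\sup_t\|\cdot\|_q$, so no chaining or maximal inequality over $t$ is needed, only uniform constants). Writing $K_{\tau_n}(u)=K(u/\tau_n)$ and $a_\ell(t)=\sum_{i=1}^m(i/n-t)^\ell K_{\tau_n}(i/n-t)$ for $\ell=0,1,2$, solving the $2\times2$ normal equations gives the equivalent-kernel form
\[
\hat\mu^{1:m}(t)=\sum_{i=1}^m w_i(t)X_{i,n},\qquad w_i(t)=\frac{\big(a_2(t)-a_1(t)(i/n-t)\big)K_{\tau_n}(i/n-t)}{a_0(t)a_2(t)-a_1(t)^2},
\]
with the reproducing identities $\sum_i w_i(t)=1$ and $\sum_i(i/n-t)w_i(t)=0$. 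A Riemann-sum comparison (legitimate since, by (K), the integrands $u^\ell K(u)$ have bounded variation) shows $a_\ell(t)=n\tau_n^{\,\ell+1}\nu_\ell(t)+O(\tau_n^\ell)$, where $\nu_\ell(t)$ is the kernel moment over the possibly truncated window near the boundary, so that $a_0a_2-a_1^2=n^2\tau_n^4(\nu_0\nu_2-\nu_1^2)+o(n^2\tau_n^4)$. The first step is to verify that $\nu_0(t)\nu_2(t)-\nu_1(t)^2$ is bounded away from zero uniformly in $t$ up to the observation boundary; this is the boundary-adaptivity of local linear regression and follows from the strict Cauchy--Schwarz inequality for the nondegenerate density $K$ on any subinterval of positive length, using $m\ge\lfloor cn\rfloor$ (so the effective window always contains order $n\tau_n$ design points) and $n\tau_n\to\infty$ (guaranteed by $n\tau_n^3\to\infty$).

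Inserting $X_{i,n}=\mu(i/n)+\epsilon_{i,n}$ yields
\[
\hat\mu^{1:m}(t)-\mu(t)=\underbrace{\Big(\sum_{i=1}^m w_i(t)\mu(i/n)-\mu(t)\Big)}_{=:B_n(t)}+\underbrace{\sum_{i=1}^m w_i(t)\epsilon_{i,n}}_{=:V_n(t)}.
\]
For the bias I would Taylor-expand $\mu(i/n)=\mu(t)+\dot\mu(t)(i/n-t)+\tfrac12\ddot\mu(t)(i/n-t)^2+R_i$; the reproducing identities annihilate the constant and linear parts, leaving $B_n(t)=\tfrac12\ddot\mu(t)\sum_i w_i(t)(i/n-t)^2+\sum_i w_i(t)R_i$. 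Since $|i/n-t|\le\tau_n$ on the support of $w_i$ and $\ddot\mu$ is bounded and Lipschitz by (M1), the leading term is $O(\tau_n^2)$ and the remainder, controlled by $|R_i|\le \tfrac{L}{2}|i/n-t|^3$, is $O(\tau_n^3)$, both uniformly in $t$. As $B_n(t)$ is deterministic, $\|B_n(t)\|_q=O(\tau_n^2)$.

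For the stochastic term, the computation above gives $\max_i|w_i(t)|=O((n\tau_n)^{-1})$ on a support of $O(n\tau_n)$ indices, hence $\sum_i w_i(t)^2=O((n\tau_n)^{-1})$ uniformly in $t$. To bound $\|V_n(t)\|_q$ under dependence I would use the martingale decomposition $V_n(t)=\sum_j \pp_j V_n(t)$ together with Burkholder's inequality,
\[
\|V_n(t)\|_q\le C_q\Big(\sum_j\|\pp_j V_n(t)\|_q^2\Big)^{1/2},\qquad \|\pp_j V_n(t)\|_q\le\sum_{i\ge j}|w_i(t)|\,\delta_q(G,i-j),
\]
where the second inequality uses $\|\pp_j G(i/n,\FF_i)\|_q\le\delta_q(G,i-j)$, which holds uniformly in the first argument by (L1). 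Applying Young's convolution inequality to the sequences $(|w_i(t)|)_i$ and $(\delta_q(G,k))_k$ gives $\|V_n(t)\|_q\le C_q\big(\sum_i w_i(t)^2\big)^{1/2}\sum_{k\ge0}\delta_q(G,k)$, and (L3) makes $\sum_k\delta_q(G,k)=O(\sum_k\chi^k)<\infty$, so $\|V_n(t)\|_q=O((n\tau_n)^{-1/2})$ uniformly in $t$. Combining the two bounds yields $\sup_t\|\hat\mu^{1:m}(t)-\mu(t)\|_q=O(\tau_n^2+(n\tau_n)^{-1/2})$, as asserted.

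I expect the main obstacle to be twofold. The delicate analytic point is the uniform invertibility of the design near the boundary of the observation range: one must show the normalized determinant $\nu_0\nu_2-\nu_1^2$ stays bounded below even when the kernel window is truncated, which is where $m\ge\lfloor cn\rfloor$ and $n\tau_n\to\infty$ are used to keep the $a_\ell(t)$ close to their integral counterparts with negligible remainders. The more substantive probabilistic point is the dependent-data moment bound for $V_n(t)$: because $\epsilon_{i,n}=G(i/n,\FF_i)$ is genuinely non-stationary (an index-dependent filter), a stationary moment inequality is unavailable, and one must lean on the uniform-in-$t$ physical dependence measure from (L1) together with the geometric decay in (L3)—the latter is precisely what collapses the convolution $\sum_{i\ge j}|w_i(t)|\delta_q(G,i-j)$ to the clean $\ell^2$ rate and thus produces the $(n\tau_n)^{-1/2}$ term.
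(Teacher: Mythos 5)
Your proposal is correct and follows essentially the same route as the paper: the closed-form equivalent-kernel representation of the local linear estimator, a Taylor-expansion bias bound of $O(\tau_n^2)$ using (M1) and the reproducing identities, and a Burkholder-plus-physical-dependence bound of $O((n\tau_n)^{-1/2})$ for the stochastic term via the geometric decay in (L3). The only cosmetic difference is in organizing the martingale decomposition — the paper sums over lags first and applies Burkholder to each fixed-lag martingale $\sum_i \pp_{i-k}(\cdot)$ before summing the resulting norms over $k$, whereas you apply Burkholder once to $\sum_j \pp_j V_n(t)$ and finish with Young's convolution inequality — but this is the same computation, and your explicit treatment of the boundary nondegeneracy of $\nu_0\nu_2-\nu_1^2$ spells out what the paper dismisses as ``straightforward but tedious calculations.''
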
	
	\begin{proof}
		Define the quantities $M_k(t)$, $k=0,1,2$ as 
		\begin{align*}
	M_k(t)=\frac{1}{n\tau_n}\sum_{i=1}^mK\Big (\frac{i/n-t}{\tau_n}\Big )\Big (\frac{i/n-t}{\tau_n}\Big )^k.
		\end{align*}
		The straightforward but tedious calculations by  solving \eqref{localnew} we have for $t\in [0,\frac{m}{n}]$ the solution is
		\begin{align}\label{new.6.2}
		\hat \mu^{1:m}(t)=\frac{1}{n\tau_n}\sum_{i=1}^m \big(\mu \big (\frac{i}{n} \big)+\epsilon_{i,n}\big)	K^*\Big(\frac{i/n-t}{\tau_n}\Big),
		\end{align}
		where 
		\begin{align*}
		K^*\Big(\frac{i/n-t}{\tau_n}\Big)=\frac{M_2(t)K(\frac{i/n-t}{\tau_n})-M_1(t)K(\frac{i/n-t}{\tau_n})(\frac{i/n-t}{\tau_n})}{M_0(t)M_2(t)-M_1^2(t)},
		\end{align*}
		with $0/0=0$ for convenience.
		Observe that $K^*$ is bounded and has a compact support on $[-1,1]$.
	Observing the identity
	\begin{align}\label{new.30}
	\Big\|\sum_{i=1}^m\frac{1}{n\tau_n}K^*
	\Big (\frac{i/n-t}{\tau_n}\Big)\epsilon_{i,n}\Big\|_q=	\Big\|\frac{1}{n\tau_n}\sum_{k=0}^\infty \sum_{i=1}^m\pp_{i-k}K^* \Big (\frac{i/n-t}{\tau_n} \Big)\epsilon_{i,n}\Big\|_q,
	\end{align}
	and applying  Burkholder's  inequality to the martingale difference $\sum_{i=1}^m\pp_{i-k}K^*\big (\frac{i/n-t}{\tau_n}\big )\epsilon_{i,n}$ shows
		\begin{align}\label{new.31}
	\Big\|\sum_{i=1}^m\pp_{i-k}K^* \Big (\frac{i/n-t}{\tau_n} \Big )\epsilon_{i,n}\Big\|^2_q\leq C_0q
	\sum_{i=1}^m\Big\|\pp_{i-k}K^*\Big (\frac{i/n-t}{\tau_n} \Big )\epsilon_{i,n}\Big\|_q^2\leq C_0qn\tau_n\delta_q^2(k)
	\end{align}
	for some   constant $C_0$, where 	
	we have used the same arguments as given in the proof
	 of Theorem 1 in \cite{wu2005nonlinear} for the last inequality, and have used the fact that $m\geq \lf cn\rf$.
	 Combining  \eqref{new.30} and \eqref{new.31} leads to
	 \begin{align}\label{new.32}
	\Big\|\sum_{i=1}^m\frac{1}{n\tau_n}K^* \Big (\frac{i/n-t}{\tau_n} \Big )\epsilon_{i,n}\Big\|_q\leq C^{1/2}_0q^{1/2}(n\tau_n)^{-1/2}\sum_{k=0}^\infty \delta_q(k).
	\end{align}
	Now elementary calculations using condition (M1) with Taylor expansion show that 
		\begin{align}\label{new.67}
	\sup_{t\in[0,1]} \Big |\frac{1}{n\tau_n}\sum_{i=1}^m\mu(\frac{i}{n})	K^*\Big(\frac{i/n-t}{\tau_n}\Big)-\mu(t) \Big |=O(\tau^2_n).
	\end{align}
	Then the the assertion  follows from \eqref{new.6.2}, \eqref{new.32} and \eqref{new.67}.
	\end{proof}

\begin{proposition}\label{March16_prop5}
If assumptions (L1)-(L3), (M1) are satisfied, $n\tau^{3}_n\rightarrow \infty$ and $n\tau_n^6=o(1)$, then
 we have for $1\leq k\leq n$,
	\begin{align*}
	\Big \|\max_{1\leq j\leq n}|\sum_{i=1}^j(\epsilon_{i,n}\epsilon_{i+k,n}-\hat \epsilon_{i,n}\hat \epsilon_{i+k,n})| \Big \|_{q/2}=O(\alpha_{n}),	\end{align*}
	where
	$\alpha_{n}=n\tau_n^3+\tau_n^{-1}+\sqrt{n\tau_n}$.
	\end{proposition}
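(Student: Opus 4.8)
The plan is to linearise the residuals and reduce the statement to bounds on weighted partial sums and quadratic forms in the innovations $\{\epsilon_{i,n}\}$. Writing $\Delta_i := \hat\mu(i/n)-\mu(i/n)$, so that $\hat\epsilon_{i,n}=\epsilon_{i,n}-\Delta_i$, I would expand
\begin{align*}
\epsilon_{i,n}\epsilon_{i+k,n}-\hat\epsilon_{i,n}\hat\epsilon_{i+k,n}=\epsilon_{i,n}\Delta_{i+k}+\Delta_i\epsilon_{i+k,n}-\Delta_i\Delta_{i+k},
\end{align*}
so that after summing over $i\le j$ and taking the maximum over $j$ it suffices to control the three associated partial-sum processes in $\|\cdot\|_{q/2}$. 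Using the closed form \eqref{new.6.2} (with $m=n$) I further split $\Delta_i=b_i+V_i$, where $b_i$ is the deterministic bias and $V_i=\tfrac1{n\tau_n}\sum_{s}\epsilon_{s,n}K^*(\tfrac{s/n-i/n}{\tau_n})$ is the stochastic part; by \eqref{new.67} and \eqref{new.32} these obey $\sup_i|b_i|=O(\tau_n^2)$ and $\sup_i\|V_i\|_q=O((n\tau_n)^{-1/2})$. This reduces every term to one of three types: purely deterministic products, sums linear in the $\epsilon_{i,n}$ with deterministic weights, and genuine quadratic forms in the innovations.

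For the deterministic and linear pieces the bounds are routine. The bias$\times$bias contribution $\sum_{i\le j}b_ib_{i+k}$ is deterministic of size $O(n\tau_n^4)$, which is absorbed into the $n\tau_n^3$ term. For the linear pieces $\sum_{i\le j}\epsilon_{i,n}b_{i+k}$ and $\sum_{i\le j}b_i\epsilon_{i+k,n}$ I would apply the projection decomposition $\pp_{\ell}$ together with Burkholder's inequality to the resulting martingale differences and Doob's maximal inequality to pass to the maximum over $j$; the summability $\sum_{\ell}\delta_q(G,\ell)<\infty$ coming from (L3) then yields a bound of order $\sqrt n\,\sup_i|b_i|=O(\sqrt n\,\tau_n^2)=o(\sqrt{n\tau_n})$. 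The mixed deterministic--stochastic pieces $\sum_{i\le j}b_iV_{i+k}$ and their symmetric counterparts are controlled crudely by the triangle inequality using $\sup_i\|V_i\|_q=O((n\tau_n)^{-1/2})$, which again produces a negligible contribution of order $\sqrt n\,\tau_n^{3/2}$.

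The essential terms are the quadratic forms $\sum_{i\le j}\epsilon_{i,n}V_{i+k}$, $\sum_{i\le j}V_i\epsilon_{i+k,n}$ and $\sum_{i\le j}V_iV_{i+k}$. For each I would separate the expectation from the centred fluctuation. The expectation is computed directly: since $\mathrm{Cov}(\epsilon_{i,n},\epsilon_{s,n})$ decays geometrically in $|i-s|$ by (L1)--(L3) while $K^*$ is bounded with support of width $O(n\tau_n)$, each summand contributes $O((n\tau_n)^{-1})$ and the sum over $i\le j\le n$ is of order $\tau_n^{-1}$, producing the middle term of $\alpha_n$. The centred fluctuation is again treated by the projection/Burkholder/Doob machinery, now applied to the window-averaged array $Z_i=\epsilon_{i,n}V_{i+k}$ (respectively $V_i\epsilon_{i+k,n}$, $V_iV_{i+k}$), for which I must bound the physical-dependence quantities $\|\pp_{i-\ell}Z_i\|_{q/2}$ uniformly in $k$.

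The main obstacle is precisely this last step. Because $V_{i+k}$ is an average of the innovations over a window of width $O(n\tau_n)$, the variables $Z_i$ are dependent over a range that grows with $n$, and the strength of this dependence, hence the size of $\|\pp_{i-\ell}Z_i\|_{q/2}$, depends delicately on the interplay between the lag $k$, the window width $n\tau_n$ and the geometric rate $\chi$ from (L3); the bound has to hold uniformly for all $1\le k\le n$ and simultaneously for the maximum over $j$. Carrying out this bookkeeping is what yields the remaining fluctuation contributions of order $\sqrt{n\tau_n}$ and $n\tau_n^3$. Collecting the deterministic ($O(n\tau_n^4)$), linear ($o(\sqrt{n\tau_n})$), expectation ($O(\tau_n^{-1})$) and fluctuation ($O(\sqrt{n\tau_n}+n\tau_n^3)$) contributions then gives the claimed rate $\alpha_n=n\tau_n^3+\tau_n^{-1}+\sqrt{n\tau_n}$.
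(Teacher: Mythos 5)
Your decomposition is exactly the natural one and, as far as one can tell, the one the paper intends: the paper's own ``proof'' of Proposition \ref{March16_prop5} is a single sentence deferring to the proof of Theorem 3.1 in \cite{dette2015change}, and that argument proceeds precisely by writing $\hat\epsilon_{i,n}=\epsilon_{i,n}-\Delta_i$, expanding $\epsilon_{i,n}\epsilon_{i+k,n}-\hat\epsilon_{i,n}\hat\epsilon_{i+k,n}$ into the three cross terms, splitting $\Delta_i$ into bias and stochastic parts via the closed form \eqref{new.6.2}, and controlling the resulting linear and quadratic forms with the projection operators $\pp_\ell$, Burkholder's inequality and Doob's maximal inequality. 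Your accounting for the easy pieces is correct: the bias--bias term is $O(n\tau_n^4)=o(n\tau_n^3)$, the $\epsilon\times$bias martingale terms are $O(\sqrt n\,\tau_n^2)=o(\sqrt{n\tau_n})$, and the expectations of the quadratic forms contribute $O(\tau_n^{-1})$ by the geometric decay \eqref{small-cov} together with the $O(n\tau_n)$-width kernel support.

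The gap is the one you name yourself: the uniform-in-$k$ (and uniform-in-$j$) control of the centred quadratic forms $\sum_{i\le j}(\epsilon_{i,n}V_{i+k}-\E[\epsilon_{i,n}V_{i+k}])$ and $\sum_{i\le j}(V_iV_{i+k}-\E[V_iV_{i+k}])$ is asserted (``carrying out this bookkeeping is what yields\ldots'') rather than executed, and this is precisely the step that is nontrivial --- the summands are dependent over a range of width $n\tau_n$ that grows with $n$, so a naive application of Burkholder to $Z_i=\epsilon_{i,n}V_{i+k}$ does not give summable projection norms without first unfolding $V_{i+k}$ into the double sum $\tfrac1{n\tau_n}\sum_s\epsilon_{s,n}K^*(\cdot)$ and bounding $\|\pp_{\ell}(\epsilon_{i,n}\epsilon_{s,n})\|_{q/2}$ via (L3) before resumming over the kernel window. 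Without that computation the claimed fluctuation order (and hence the presence of the $\sqrt{n\tau_n}$ and $n\tau_n^3$ terms in $\alpha_n$) is not established; everything else in your write-up is a correct reduction, but the proposal as it stands reproduces the paper's citation-level sketch rather than a proof. To close it, follow the proof of Theorem 3.1 in \cite{dette2015change} (or the argument already used in \eqref{new.30}--\eqref{new.32}) applied to the double-indexed array $\epsilon_{i,n}\epsilon_{s,n}K^*\bigl(\tfrac{(s-i-k)/n}{\tau_n}\bigr)$.
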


	\begin{proof}
Proposition 2 follows using similar arguments as given in the  proof of Theorem 3.1 in \cite{dette2015change}.
\end{proof}

\begin{proposition}\label{May-5-prop1} If the assumptions of Theorem \ref{thm2} are satisfied,
and $0\leq k\leq l_n$, there exists a sufficiently large constant $M$ such that
	\begin{align*}
	&\text{(i)\ \ }	\sup_{t\in [0,1]}\|\hat \gamma_k(t)-\gamma_k(t)\|_{q/2}\leq
	M \Big ((nb_n)^{-1/2}+D_kb_n^2+\frac{k}{n}+\frac{\alpha_{n}}{nb_n} \Big),\\
	&\text{(ii)\ \ }	\Big\|\sup_{t\in [0,1]}|\hat \gamma_k(t)-\gamma_k(t)| \Big \|_{q/2}\leq
	M \Big (b_n^{-2/q}(nb_n)^{-1/2}+D_kb_n^2+\frac{k}{n}+\frac{\alpha_{n}}{nb_n} \Big ).
	\end{align*}
\end{proposition}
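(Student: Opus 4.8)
The plan is to solve the weighted least-squares problem \eqref{May5-35} explicitly, exactly as in the proof of Proposition \ref{Propostiion_4}, so that for even $k$ one obtains the equivalent-kernel representation $\hat \gamma_k(t)=\frac{1}{nb_n}\sum_{i=1}^n \hat\epsilon_{i-k/2,n}\hat\epsilon_{i+k/2,n}\,K^*\big(\frac{i/n-t}{b_n}\big)$, where $K^*$ is bounded, Lipschitz and supported on $[-1,1]$. Writing $Y_i=\epsilon_{i-k/2,n}\epsilon_{i+k/2,n}$ and $\hat Y_i=\hat\epsilon_{i-k/2,n}\hat\epsilon_{i+k/2,n}$, I would split the error into three pieces,
\[
\hat\gamma_k(t)-\gamma_k(t)=\tfrac{1}{nb_n}\sum_i(\hat Y_i-Y_i)K^*\big(\tfrac{i/n-t}{b_n}\big)+\tfrac{1}{nb_n}\sum_i(Y_i-\E Y_i)K^*\big(\tfrac{i/n-t}{b_n}\big)+\Big(\tfrac{1}{nb_n}\sum_i\E Y_i\,K^*\big(\tfrac{i/n-t}{b_n}\big)-\gamma_k(t)\Big),
\]
a residual-substitution term $R_n(t)$, a centered stochastic term $S_n(t)$, and a deterministic bias term $B_n(t)$. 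The odd-lag case in \eqref{May5-36} follows verbatim by applying the same split to each of $\hat\gamma_k^+$ and $\hat\gamma_k^-$ and averaging.

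For $R_n(t)$ I would apply Abel summation against the weights $a_i=\frac{1}{nb_n}K^*\big(\frac{i/n-t}{b_n}\big)$. Lipschitz continuity of $K^*$ gives $|a_i-a_{i+1}|=O((n^2b_n^2)^{-1})$, and since only $O(nb_n)$ indices lie in the support, the total variation of the weight sequence is $O((nb_n)^{-1})$; hence $\sup_t|R_n(t)|\leq O((nb_n)^{-1})\,\max_j\big|\sum_{i\le j}(\hat Y_i-Y_i)\big|$. After reindexing the centered products back to the $\epsilon_i\epsilon_{i+k}$ form, Proposition \ref{March16_prop5} bounds the $L^{q/2}$ norm of this maximal partial sum by $O(\alpha_n)$, producing the term $\alpha_n/(nb_n)$; crucially this estimate is already uniform in $t$, so it serves both (i) and (ii). For $B_n(t)$ I would first compare $\E Y_i$ with $\gamma_k(i/n)$: the only discrepancy is that $G$ is evaluated at the two arguments $(i\mp k/2)/n$ rather than at a common $t$, so by assumption (L2) and Cauchy--Schwarz this is of order $k/n$. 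The remaining local-linear smoothing bias of $\gamma_k$ is then treated by a standard Taylor expansion in which the linear part cancels by symmetry of $K$, leaving, via the Lipschitz-derivative assumption (A1) with constant $D_k$, the contribution $D_k b_n^2$. Both pieces are deterministic and hence automatically uniform in $t$.

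The main work, and the main obstacle, is the centered stochastic term $S_n(t)$. I would regard $\{Y_i-\E Y_i\}$ as a physically dependent sequence and apply the projection decomposition with the operators $\pp_{i-j}$ followed by Burkholder's inequality, as in \eqref{new.30}--\eqref{new.32}. The preliminary estimate needed is control of the dependence measure of the product sequence in the $L^{q/2}$ norm: by Cauchy--Schwarz and (L1) one bounds the coupling difference of $Y_i$ by $\sup_t\|G(t,\FF_0)\|_q$ times $\delta_q(G,\cdot)$, which is summable and geometrically decaying under (L3). This yields $\|S_n(t)\|_{q/2}=O((nb_n)^{-1/2})$ pointwise and completes part (i). For the uniform bound in part (ii) I would discretize $[0,1]$, control the maximum over the grid through the moment maximal inequality $\|\max_\ell|S_n(t_\ell)|\|_{q/2}\le N^{2/q}\max_\ell\|S_n(t_\ell)\|_{q/2}$, and bound the oscillation between neighbouring grid points using the Lipschitz continuity of $K^*$ in $t$. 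The delicate point—and the reason this is the hardest step—is the balance between the grid resolution $N$ and the moment order $q/2$: the maximal inequality injects the factor $N^{2/q}$, so the grid must be taken fine enough that the oscillation term is of lower order yet coarse enough that $N^{2/q}$ does not exceed $b_n^{-2/q}$; carrying out this balance is exactly what converts the pointwise rate $(nb_n)^{-1/2}$ into the uniform rate $b_n^{-2/q}(nb_n)^{-1/2}$ claimed in (ii).
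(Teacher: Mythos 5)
Your decomposition is the same as the paper's: the paper first replaces $\hat\gamma_k$ by its analogue $\tilde\gamma_k$ built from the true errors and bounds $\sup_t|\tilde\gamma_k(t)-\hat\gamma_k(t)|$ by $O(\alpha_n/(nb_n))$ via summation by parts and Proposition \ref{March16_prop5} (your $R_n$), then splits $\tilde\gamma_k-\gamma_k$ into a deterministic part $\Xi^d_k=\Xi^d_{1,k}+\Xi^d_{2,k}$ handled exactly as you describe ((L1)--(L2) give the $k/n$ term, (A1) gives $D_kb_n^2$) and a centered stochastic part $\Xi^s_k$ treated by the projection operators $\pp_{i+k/2-s}$, Cauchy--Schwarz on $\|\pp_{i+k/2-s}\eta_{i,k}\|_{q/2}$, and Burkholder, yielding the pointwise rate $(nb_n)^{-1/2}$ for part (i). Up to that point your proposal reproduces the paper's argument.

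The one step where your plan, as written, would not go through is the uniformity in part (ii). You propose to control the oscillation between grid points ``using the Lipschitz continuity of $K^*$ in $t$.'' If this means the deterministic bound $|S_n(t)-S_n(s)|\leq \frac{L|t-s|}{nb_n^2}\sum_i|Y_i-\E Y_i|$, then in $L^{q/2}$ the uniform oscillation over a grid of mesh $1/N$ is only $O(N^{-1}b_n^{-2})$, since the triangle inequality destroys the square-root cancellation in the centered sum; with the constraint $N\leq b_n^{-1}$ that you correctly identify as necessary to keep $N^{2/q}\leq b_n^{-2/q}$, this oscillation term is $O(b_n^{-1})$, which diverges. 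What is actually needed is a \emph{moment} bound on the derivative process: $\partial_t S_n(t)$ is itself a centered kernel-weighted sum (with kernel $K^{*\prime}$ and an extra factor $b_n^{-1}$), so the same projection/Burkholder argument gives $\sup_t\|\partial_t S_n(t)\|_{q/2}=O\big((nb_n)^{-1/2}b_n^{-1}\big)$; combining this with the $N^{2/q}$ maximal inequality on a grid of $N\asymp b_n^{-1}$ points then yields $b_n^{-2/q}(nb_n)^{-1/2}$. This is precisely what the paper does via its inequality \eqref{Inequal4} together with Proposition B.1 of \cite{dette2015change}. So the repair is local and uses machinery you have already set up, but the step must be made with the Burkholder bound on $\partial_t S_n$, not with a pathwise Lipschitz estimate.
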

\begin{proof}
 Without loss of generality, we assume that the lag $k$ is even and define $\tilde \gamma_k(t)$ as the analogue  $\hat \gamma_k(t)$   in \eqref{May5-35},
 where the residuals $\hat \epsilon_{i,n}$ are replaced by the ``true'' errors  $\epsilon_{i,n}$, that is
\begin{align*}
(\tilde \gamma_{k}(t),\tilde \gamma'_{k}(t))^\top=\argmin_{(\beta_0,\beta_1)\in \mathbb R^2}\sum_{i=1}^n\big( \epsilon_{i-k/2,n} \epsilon_{i+k/2,n}-\beta_0-\beta_1(i/n-t)\big)^2K\Big (\frac{i/n-t}{b_n}\Big ).
\end{align*}
Elementary calculations show that
\begin{align}
\label{Solvetildegamma}
\tilde \gamma_k(t)=\frac{\frac{M_2(t)}{nb_n}\sum_{i=1}^n\epsilon_{i-k/2,n}\epsilon_{i+k/2,n}K(\frac{i/n-t}{b_n})-\frac{M_1(t)}{nb_n}\sum_{i=1}^n\epsilon_{i-k/2,n}\epsilon_{i+k/2,n}K(\frac{i/n-t}{b_n})(\frac{i/n-t}{b_n})}{M_0(t)M_2(t)-M^2_1(t)},
\end{align}
where
\begin{align*}
M_k(t)=\frac{1}{nb_n}\sum_{i=1}^nK \Big (\frac{i/n-t}{b_n} \Big) \Big (\frac{i/n-t}{b_n} \Big)^k, \quad k=0,1,2.
\end{align*}
Similarly,  we have
\begin{align*}
\hat \gamma_k(t)=\frac{\frac{M_2(t)}{nb_n}\sum_{i=1}^n\hat \epsilon_{i-k/2,n}\hat \epsilon_{i+k/2,n}K(\frac{i/n-t}{b_n})-\frac{M_1(t)}{nb_n}\sum_{i=1}^n\hat \epsilon_{i-k/2,n}\hat \epsilon_{i+k/2,n}K(\frac{i/n-t}{b_n})(\frac{i/n-t}{b_n})}{M_0(t)M_2(t)-M^2_1(t)}
\end{align*}
and  using the summation by parts formula and Proposition \ref{March16_prop5} it follows that
\begin{align*}
\Big\|\sup_{t\in [0,1]}|\tilde \gamma_k(t)-\hat \gamma_k(t)|\Big\|_{q/2}=O(\frac{\alpha_{n}}{nb_n}).
\end{align*}
 uniformly with respect to  $1\leq k\leq n$
 and it   remains  to show that
	\begin{align*}
&\text{(a)\ \ }	\sup_{t\in [0,1]}\Big\|\tilde \gamma_k(t)-\gamma_k(t)\Big\|_{q/2}\leq M \Big((nb_n)^{-1/2}+D_kb_n^2+\frac{k}{n}\Big),\\
&\text{(b)\ \ }\Big	\|\sup_{t\in [0,1]}|\tilde \gamma_k(t)-\gamma_k(t)|\Big\|_{q/2}\leq  M \Big( b_n^{-2/q}(nb_n)^{-1/2}+D_kb_n^2+\frac{k}{n}\Big).
\end{align*}
Let $\eta_{i,k}=\epsilon_{i-k/2,n}\epsilon_{i+k/2,n}$ (note that $k$ is even).  By \eqref{Solvetildegamma} we have
\begin{align*}
    \tilde \gamma(t)=\tilde M_1(t)\big(\frac{1}{nb_n}\sum_{i=1}^n\eta_{i,k}K\big(\frac{i/n-t}{b_n}\big)\big)+\tilde M_2(t)\big(\frac{1}{nb_n}\sum_{i=1}^n\eta_{i,k}K\big(\frac{i/n-t}{b_n}\big)\big(\frac{i/n-t}{b_n}\big)\big)
\end{align*}
with $\tilde M_1(t)=\frac{M_2(t)}{M_0(t)M_2(t)-M^2_1(t)}$, $\tilde M_2(t)=\frac{-M_1(t)}{M_0(t)M_2(t)-M^2_1(t)}$. Notice that
\begin{align*}
    \gamma(t)=\tilde M_1(t)\big(\frac{1}{nb_n}\sum_{i=1}^n\gamma_k(t)K\big(\frac{i/n-t}{b_n}\big)\big)+\tilde M_2(t)\big(\frac{1}{nb_n}\sum_{i=1}^n\gamma_k(t)K\big(\frac{i/n-t}{b_n}\big)\big(\frac{i/n-t}{b_n}\big)\big)
\end{align*}
As a result, we can decompose $\tilde \gamma_k(t)-\gamma_k(t)$ into a random part and a deterministic part, i.e.
\begin{align*}
\tilde \gamma_k(t)-\gamma_k(t)=\Xi^d_k(t)+\Xi^s_k(t),
\end{align*}
where
\begin{align*}
 \Xi^d_k(t)=\tilde M_1(t) \frac{1}{nb_n}&\sum_{i=1}^n(\E \eta_{i,k}-\gamma_k(t))K\big(\frac{i/n-t}{b_n}\big) \notag\\&+\tilde M_2(t) \frac{1}{nb_n}\sum_{i=1}^n(\E \eta_{i,k}-\gamma_k(t))K\big(\frac{i/n-t}{b_n}\big)\big(\frac{i/n-t}{b_n}\big) ,\\
   \Xi^s_k(t)=\tilde M_1(t) \frac{1}{nb_n}&\sum_{i=1}^n(\eta_{i,k}-\E \eta_{i,k})K\big(\frac{i/n-t}{b_n}\big) \notag\\&+\tilde M_2(t) \frac{1}{nb_n}\sum_{i=1}^n(\eta_{i,k}-\E \eta_{i,k})K\big(\frac{i/n-t}{b_n}\big)\big(\frac{i/n-t}{b_n}\big) .
\end{align*}
To complete the proof we will show that  (uniformly for $0\leq k\leq l_n$)
\begin{align}
    &	\sup_{t\in [0,1]}|\Xi^d_{k}(t)|\leq M \left( D_kb_n^2+ {\frac{k}{n}+\frac{1}{nb_n}}\right),\label{Inequal1}\\
    &	\sup_{t\in [0,1]}\|\Xi^s_{k}(t)\|_{q/2}\leq M \left( (nb_n)^{-1/2}\right),
    \label{Inequal2}\\
    &\|\sup_{t\in [0,1]}|\Xi^s_{k}(t)|\|_{q/2}\leq M \left( b_n^{-2/q}(nb_n)^{-1/2}\right).
    \label{Inequal3}
\end{align}
Observe that $\Xi_{k}^d(t)$  can be further decomposed as
\begin{align*} 
\Xi^d_{k}(t)=\Xi^{d}_{1,k}(t)+\Xi^{d}_{2,k}(t),
\end{align*}
where
\begin{align*}
 \Xi^d_{1,k}(t)=\tilde M_1(t) \frac{1}{nb_n}&\sum_{i=1}^n(\E \eta_{i,k}-\gamma_k(i/n))K\big(\frac{i/n-t}{b_n}\big) \notag\\&+\tilde M_2(t) \frac{1}{nb_n}\sum_{i=1}^n(\E \eta_{i,k}-\gamma_k(i/n))K\big(\frac{i/n-t}{b_n}\big)\big(\frac{i/n-t}{b_n}\big) ,\\
   \Xi^d_{2,k}(t)=\tilde M_1(t) \frac{1}{nb_n}&\sum_{i=1}^n(\gamma_k(i/n)-\gamma_k(t))K\big(\frac{i/n-t}{b_n}\big) \notag\\&+\tilde M_2(t) \frac{1}{nb_n}\sum_{i=1}^n(\gamma_k(i/n)-\gamma_k(t))K\big(\frac{i/n-t}{b_n}\big)\big(\frac{i/n-t}{b_n}\big) .
\end{align*}
By conditions (L1), (L2) and  a Taylor expansion it follows that
\begin{align*}
|\E(\eta_{i,k})-\gamma_k(i/n)|\notag=
&\Big|\E\Big(G\Big(\frac{i-k/2}{n},\FF_0\Big)G\Big(\frac{i+k/2}{n},\FF_k\Big)\Big)-\E\Big(G\Big(\frac{i}{n},\FF_0\Big)G\Big(\frac{i}{n},\FF_k\Big)\Big)\Big|\notag\\
=&O(k/n)
\end{align*}
uniformly with respect to  $i$.
A straightforward but tedious calculation now shows that
\begin{align}\label{Xid_1}
\sup_{t\in[0,1]}|\Xi^{d}_{1,k}(t)|= {O(k/n+{\frac{1}{nb_n}})}
\end{align}as $n\rightarrow \infty$,
uniformly with respect to $0\leq k\leq l_n$.
In addition  by condition (A1), we obtain that
\begin{align}\label{Xid_2}
\sup_{t\in[0,1]}|\Xi^{d}_{2,k}(t)|=O(D_kb_n^2 {+\frac{1}{nb_n}})
\end{align}
(uniformly for $0\leq k\leq l_n$).
As a result, inequality \eqref{Inequal1} follows from \eqref{Xid_1} and \eqref{Xid_2}.
For $\Xi^s_k(t)$, an application of
the   Cauchy-Schwartz inequality shows  that
\begin{align*} 
    \|\mathcal P_{i+k/2-s}\eta_{i,k}\|_{q/2}\leq M(\delta_q(s)+\mathbf 1(s\geq k)\delta_q(s-k)),
\end{align*}
(uniformly with respect to  $i$)
and   assertion  \eqref{Inequal2} now follows using  similar  arguments as given in the  proof of Proposition \ref{Propostiion_4}.
 By Assumption (K)
 and  similar  arguments as given in the  proof of Proposition \ref{Propostiion_4} we have
\begin{align}
   \sup_{t\in [0,1]}\Big\|\frac{\partial}{\partial t}\Xi^s_{k}(t)\Big\|_{q/2}\leq M\left( (nb_n)^{-1/2}b_n^{-1}\right)
	\label{Inequal4}
\end{align}
(uniformly with respect  $0\leq k\leq l_n$).
Finally, inequality \eqref{Inequal3}    follows from \eqref{Inequal2}, \eqref{Inequal4}  and Proposition B.1  in  \cite{dette2015change}, which completes the proof.
\end{proof}

\subsection{Proof of Theorem \ref{thm1} and \ref{thm2}}

For the sake of brevity we restrict ourselves to the proof of Theorem \ref{thm2}.
Theorem \ref{thm1}  can be shown by similar but substantially simpler arguments.

Define the banded matrix  $\Sigma_{l_{n},n}:=(\sigma_{i,j}\mathbf 1(|i-j|\leq l_n))$, where we use  the  symbol $\sigma_{i,j}$  for  $\sigma_{i,j,n}$
to simplify the notation. Note that $\Sigma_{l_{n},n}-\Sigma_n$ is a symmetric matrix and by   Gershgorin's circle theorem
it follows that
\begin{align}\label{April13-37}
\rho(\Sigma_{l_{n},n}-\Sigma_n) & \leq \max_{1\leq i\leq n} \sum_{j=1}^n|\sigma_{i,j}-\sigma_{i,j}\mathbf 1(|i-j|\leq l_n)|
\notag\\
& \leq \max_{1\leq i\leq n}\Big(\sum_{j=1}^{( i-l_n )\vee 1}|\sigma_{i,j}|+\sum_{j=( i+l_n)\wedge n}^n|\sigma_{i,j}| \Big).
\end{align}
Using similar arguments as given in the proof  of
Lemma 5 of \cite{zhou2010simultaneous}  it follows that
\begin{align}\label{small-cov}
|\sigma_{i,j}|=O\Big(\sum_{s=1}^\infty\chi^{2s+|i-j|}\Big)=O(\chi^{|i-j|})~,
\end{align}
for all $i,j\in \mathbb N$,  and straightforward calculations give
\begin{align*}
\max_{1\leq i\leq n} \Big ( \sum_{j=1}^{( i-l_n )\vee 1}|\sigma_{i,j}|
\Big  )=O(\chi^{l_n})~, ~~\max_{1\leq i\leq n} \Big  (\sum_{j=( i+l_n)\wedge n}^n|\sigma_{i,j}|
\Big )=O(\chi^{l_n}).
\end{align*}
Therefore we obtain from \eqref{April13-37} the estimate
\begin{align*} 
\rho(\Sigma_{l_n,n}-\Sigma_n)=O(\chi^{l_n}).
\end{align*}
 Note that, by definition, $\sigma_{i,j}=\E(G(\frac{i}{n},\FF_i)G(\frac{j}{n},\FF_j))$, $\gamma_{|i-j|} (\frac{i+j}{2n}) =\E(G(\frac{i+j}{2n},\FF_i)G(\frac{i+j}{2n},\FF_j))$,
 then using conditions (L1), (L2) we have
\begin{align} \label{new.5.16}
\max_{|i-j|\leq l_n}\Big|\gamma_{|i-j|}(\frac{i+j}{2n})-\sigma_{i,j}\Big|\leq M\frac{l_n}{n}
\end{align}
for some large constant $M$.

On the other hand, similarly to \eqref{April13-37} it follows that
\begin{align}
\rho(\hat \Sigma_{n}-\Sigma_{l_n,n}) &\leq \max_{1\leq i\leq n} \sum_{j=1}^n|\big(\sigma_{i,j}-\hat \gamma_{|i-j|}(\frac{i+j}{2n})
\big)\mathbf 1(|i-j|\leq l_n)|
\notag\\
& =\max_{1\leq i\leq n}
\Big (\sum_{j=( i-l_n )\vee 1}^{j=( i+l_n)\wedge n}|\hat \gamma_{|i-j|}(\frac{i+j}{2n})-\sigma_{i,j}|
\Big ). \label{hg1}
\end{align}
By Proposition  \ref{May-5-prop1} it follows that
\begin{align}\label{New.5.20}
   & \Big \|\max_{1\leq i\leq n}
   \sum_{j=( i-l_n )\vee 1}^{j=( i+l_n)\wedge n}|\hat \gamma_{|i-j|}(\frac{i+j}{2n})-\sigma_{i,j}|\Big \|_{q/2}\notag\\\leq& \Big \|\sum_{j=( i-l_n )\vee 1}^{j=( i+l_n)\wedge n}\max_{1\leq i\leq n}|\hat \gamma_{|i-j|}(\frac{i+j}{2n})-\gamma_{|i-j|}(\frac{i+j}{2n})| \Big \|_{q/2}+\max_{1\leq i\leq n}\sum_{j=( i-l_n )\vee 1}^{j=( i+l_n)\wedge n}| \gamma_{|i-j|}(\frac{i+j}{2n})-\sigma_{i,j}|\notag \\
   \leq &
   { M\Big(l_n(b_n^{-2/q}(nb_n)^{-1/2}+\frac{\alpha_n}{nb_n})+\frac{l_n^2}{n}+\sum_{i=0}^{l_n}D_ib_n^2\Big),}
\end{align}
where the quantities $D_k$ are defined in (A1), for which we have used \eqref{new.5.16}
and the estimate
\begin{align*} 
    \Big\|\sum_{j=( i-l_n )\vee 1}^{j=( i+l_n)\wedge n}\max_{1\leq i\leq n}|&\hat \gamma_{|i-j|}(\frac{i+j}{2n})-\gamma_{|i-j|}(\frac{i+j}{2n})|\Big\|_{q/2}\notag\\
  & = O\Big(\sum_{k=0}^{l_n}\Big(b_n^{-2/q}(nb_n)^{-1/2}+D_kb_n^2+\frac{k}{n}+\frac{\alpha_{n}}{nb_n}\Big)\Big)
  \notag\\& =O\Big(l_n\Big(b_n^{-2/q}(nb_n)^{-1/2}+\frac{l_n}{n}+\frac{\alpha_{n}}{nb_n}\Big)+\sum_{k=0}^{l_n} D_kb_n^2\Big)
\end{align*}
Therefore the theorem follows from  \eqref{hg1} and \eqref{New.5.20}. 

\subsection{Proof of Corollary \ref{Corol2}}
Condition (E1) shows that the quantity
\begin{align*}
W=\Sigma_{n,m}^{-1/2}\hat \Sigma_{n,m}\Sigma^{-1/2}_{n,m}.
\end{align*}
 is well defined.  By our construction, $W$  is positive definite {with probability tending to $1$}.
Then by \eqref{rnm}  and condition (E1), we have that
\begin{align*}
\|\rho(W-I_{m\times m})\|_{q/2}=O(r_n),
\end{align*}
where $I_{m\times m}$ is an $m\times m$ diagonal matrix. Now the corollary follows from the argument in the proof of Theorem 2 of  \cite{mcmurry2010banded} and the 
fact that $\rho(\Sigma_{n,m})$ is bounded which is a consequence of Gershgorin's circle theorem.


\subsection{Proof of Theorem \ref{Thm4}}
By the projection theorem,   equation \eqref{April 6-33} is equivalent to
\begin{align*}
\E(X_{m+1,n}- X^{\rm pred}_{m+1,n})=0, \quad 
\E((X_{m+1,n}-  X^{\rm pred}_{m+1,n})X_{j,n})=0 ; \quad j=1,\ldots,m.  
\end{align*}
Using these equations in  \eqref{April-6-hatX} yields
\begin{align}\label{Aug-25-2019}
& a_{m+1,n}
=\mu\left(\frac{m+1}{n}\right)-\sum_{s=1}^ma_{m+1-s,n}\mu\left(\frac{s}{n}\right) ~, \\
& \E\Big [ \Big ( G\Big(\frac{m+1}{n},\FF_{m+1}\Big)-\sum_{s=1}^ma_{m+1-s,n}G\Big(\frac{s}{n},\FF_s\Big)\Big )G\Big(\frac{j}{n},\FF_j\Big)\Big ] =0  \nonumber
\end{align}
($1\leq j\leq m$),   {which shows that the vector ${\mathbf a}_m^*$ in \eqref{April 6-33} is given by
\begin{align}\label{star_a}
\mathbf a_m^*=\Sigma_{n,m}^{-1}      \boldsymbol{ \gamma}_m,
\end{align}
where $  \boldsymbol \gamma_m = ( \sigma_{m+1,1} , \ldots , \sigma_{m+1,m} )^{\top }$.
Let
 $$
\boldsymbol \gamma_{m,l_n} = ( 0, \ldots ,  0 , \sigma_{m+1,m-l_{n}+1} , \ldots  ,  \sigma_{m+1,m}  )^{\top }
$$
be the vector with  $j_{th}$ entry  given by  $\sigma_{m+1,j}\mathbf 1(m+1-j\leq  l_n)$. By the representation of $\mathbf{\hat a}_m^*$ in \eqref{hat_a_star}, we have
\begin{align*}
\hat {\mathbf a}_m^*-\mathbf a_m^*=G_{1} + G_{2} + G_{3} ,
\end{align*}
where the terms $G_{1}$, $ G_{2}$ and $G_{3}$ are defined by
\begin{align*}
G_{1} & =\hat \Sigma_{n,m}^{-1}(\bs {\hat \gamma}_{n}^{1:m}-\bs\gamma_{m,l_n}), \\
G_{2}  &=(\hat \Sigma_{n,m}^{-1}-\Sigma_{n,m}^{-1})\bs \gamma_{m,l_n}, \\
G_{3} &=\Sigma_{n,m}^{-1}(\bs \gamma_{m,l_n}-\bs\gamma_m).
\end{align*}
In the following we shall show that $G_{j}  = O_{\mathbb{P}}(r_n)$ for $j=1,2,3$, which implies
 \begin{align}\label{bound-astar}
|\hat {\mathbf a}_m^*-\mathbf a_m^*|=O_{\mathbb{P}}(r_n).
\end{align}
Using similar arguments as given in  the derivation of \eqref{New.5.20}
we have
 \begin{align*}
\|\hat {\bs \gamma}^{1:m}_{n}-\bs \gamma_{m,l_n}\|_{q/2}=\Big\|\Big(\sum_{s= m+1- l_n}^{m}|\hat \gamma_{m+1-s}^{1:m}\big(\frac{m+s}{2n}\big)
-\sigma_{m+1,s}|^2\Big)^{1/2}\Big\|_{q/2}=O(r_n).
\end{align*}
A straightforward calculation using assumption (E1) and Corollary \ref{Corol2} show 
$$
G_{1}
 \leq |\rho(\hat \Sigma_{n,m}^{-1})|| \bs{\hat \gamma}_{n}^{1:m}-\bs{\gamma}_{m,l_n} | = O_{\mathbb{P}}(r_n).
$$ 
{By \eqref{small-cov} $|\bs{\gamma}_{m,l_n}|$ is bounded. By Corollary \ref{Corol2}  it also follows $G_{2}=O_{\mathbb{P}}(r_n)$.} Observing  \eqref{small-cov} we obtain
\begin{align}\label{gammamln}
|\bs{\gamma}_{m,l_n}-\bs{\gamma}_m|=\big(\sum_{j=1}^{ m- l_n}\sigma^2_{m,j}\big)^{1/2}\leq M\chi^{l_n}
\end{align}
which implies $G_{3}=O(r_n)$, and hence  \eqref{bound-astar} follows.
For a proof of part (a), it now  remains to show that
\begin{align}\label{boundamplus1}
|\hat a_{m+1,n}-a_{m+1,n}|=O_{\mathbb{P}}(r_n^\circ).
\end{align}
From  \eqref{Aug-25-2019} and definition \eqref{hatam} it follows that 
\begin{align}
\hat a_{m+1,n}-a_{m+1,n}&=\hat \mu^{1:m}\big(\frac{m}{n}\big)-\mu\big(\frac{m+1}{n}\big)+\Big(\sum_{s=1}^ma_{m+1-s,n}\mu\big(\frac{s}{n}\big)-\sum_{s=1}^m\hat a_{m+1-s,n}\hat \mu^{1:m}\big(\frac{s}{n}\big)\Big)\notag\\
&=\Big(\hat \mu^{1:m}(\frac{m}{n})-\mu(\frac{m+1}{n})\Big)+\sum_{s=1}^ma_{m+1-s,n}\Big(\mu(\frac{s}{n})-\hat \mu^{1:m}(\frac{s}{n})\Big)\notag\\&+\sum_{s=1}^m\hat \mu^{1:m}\big(\frac{s}{n}\big)\big(a_{m+1-s,n}-\hat a_{m+1-s,n}\big)\notag\\&:=
H_{1} + H_{2} + H_{3},\notag
\end{align}
where the statistics $H_{1}$, $H_{2} $ and $H_{3}$ are defined in an obvious way.
Using assumption (M1) and Proposition \ref{Propostiion_4}, we have that $\|H_{1}\|_{q}=O(\tau_n^2+(n\tau_n)^{-1/2})$.
For an estimate of  $H_2$ we need to determine the order of $\mathbf a_m^*$ defined in \eqref{April 6-33}.
 For this purpose we  define
\begin{align}\notag
\Sigma_{n,m,l_n}=(\sigma_{i,j,n}\mathbf 1(|i-j|\leq l_n))_{1\leq i,j\leq n},\quad \mathbf a_{m,l_n}^*=\Sigma_{n,m,l_n}^{-1}\bs \gamma_{m,l_n}~, 
\end{align}
then using \eqref{small-cov} and \eqref{gammamln} we get
\begin{align}
|\mathbf a_{m,l_n}^*-\mathbf a_m^*|=O(\chi^{l_n}).\label{amlnstar}
\end{align}
Denote by $a_{m,l_n,j}$, $\gamma_{m,l_n,j}$ the $j_{th}$ entry of  the vector $\mathbf a_{m,l_n}^*$ and $\bs \gamma_{m,l_n}$, respectively. 
Define \begin{align} \notag
H_{2,l_n}=\sum_{s=1}^ma_{m,l_n,m+1-s}\Big(\mu(\frac{s}{n})-\hat \mu^{1:m}(\frac{s}{n})\Big),
\end{align}
then, by \eqref{amlnstar} and Proposition \ref{Propostiion_4}, it follows  that
\begin{align}\label{H2ln1}
H_{2,l_n}-H_2=O_{\mathbb{P}}(\sqrt n\chi^{l_n}(\tau_n^2+(n\tau_n)^{-1/2})).
\end{align}
Hence it suffices to study the order of $H_{2,l_n}$. 
Denote the $(i,j)_{th}$ entry of  the matrix $\Sigma_{n,m,l_n}^{-1}$ by $\Sigma_{n,m,l_n}^{-1}(i,j)$. Since $\Sigma_{n,m,l_n}$ is $l_n$-banded, $\lim_{n\rightarrow \infty}\|\Sigma_{n,m,l_n}\|_F<\infty$ and condition $(E1)$, we can apply Proposition 2.2 of \cite{demko1984decay}, and obtain 
\begin{align}\label{inversesmall}
|\Sigma_{n,m,l_n}^{-1}(i,j)|\leq C_nq_n^{\frac{2|i-j|}{l_n}}~,
\end{align}
where $q_n=(\sqrt {r_n}-1)/(\sqrt {r_n}+1)$, $r_n=\lambda_{max}(\Sigma_{n,m,l_n})/\lambda_{min}(\Sigma_{n,m,l_n})$, $C_n
=\max(\lambda^{-1}_{min},C_{0n})$, $C_{0n}=(1+r_n^{1/2})^2/(2\lambda_{min}(A)r_n)$.  By condition $(E1)$ and \eqref{small-cov},  it follows that there 
exists a positive constant $M$ and a constant $Q \in ( 0,1)$ such that 
$$C_n\leq M, \quad0<q_n\leq Q<1.$$ 
Then, if $h$ a is positive constant such that $nQ^{h\log n}l_n^{1/2}=O(\log^{1/2} n)$, we have uniformly for $1\leq i\leq m-hl_n\log n$ 
\begin{align*}
a_{m,l_n,i} & =\sum_{j=1}^m\Sigma_{n,m,l_n}^{-1}(i,j)\gamma_{m,l_n,j}=\sum_{j=m-l_n+1}^m\Sigma_{n,m,l_n}^{-1}(i,j)\gamma_{m,l_n,j}\\
& =O(l_nQ^{h\log n})=O\Big (\frac{l^{1/2}_n\log^{1/2} n}{n} \Big ).
\end{align*}
On the other hand, observing the fact $|\mathbf a_m^*|\leq |\rho(\Sigma_{n,m}^{-1})| |\bs{ \gamma}_m|<\infty$ yields 
\begin{align}
\label{boundastarnorm}\sum_{i\in(m-hl_n\log n,m]}a^2_{m,l_n,i}\leq M'<\infty
\end{align}
for some constant $M'$. Thus it follows from  Proposition \ref{Propostiion_4} and an application of  the Cauchy Schwarz inequality that
\begin{align}\label{H2ln2}
|H_{2,l_n}|\leq & \Big 
|\sum_{s=1}^ma_{m,l_n,m+1-s}\Big(\mu(\frac{s}{n})-\hat \mu^{1:m}(\frac{s}{n})\Big)\mathbf 1(s\geq hl_n\log n+1)\Big |\notag\\
&+ \Big  |\sum_{s=1}^ma_{m,l_n,m+1-s}\Big(\mu(\frac{s}{n})-\hat \mu^{1:m}(\frac{s}{n})\Big)\mathbf 1(s< hl_n\log n+1)\Big  |\notag\\
&=O_{\mathbb{P}}(l^{1/2}_n\log^{1/2} n(\tau_n^2+(n\tau_n)^{-1/2})).
\end{align}
Equation \eqref{H2ln1} and \eqref{H2ln2} now show that $H_{2}=O_{\mathbb{P}}(r_n^\circ)$, where $r_n^\circ$ is defined in \eqref{rno}. 
Finally, for the estimate of   $H_3$ we define 
$$H_{3,l_n}=\sum_{s=1}^m\hat \mu^{1:m}\big(\frac{s}{n}\big)\big(a_{m,l_n,m+1-s}-\hat a_{m+1-s,n}\big).$$ By \eqref{amlnstar} we find $|H_{3,l_n}-H_{3}|=O_{\mathbb{P}}(\sqrt n\chi^{l_n})$. Notice that \eqref{bound-astar} and \eqref{amlnstar} yield that $|\hat {\mathbf a}_m^*-\mathbf a_{m,l_n}^*|=O_{\mathbb{P}}(r_n)$. Furthermore, similarly to \eqref{inversesmall},  using 
 Proposition 2.2 of \cite{demko1984decay}      it follows that there exist  constants $M_0>0$ and $Q_0 \in (0,1)$ such that
$$
|\hat \Sigma_{n,m}^{-1}(i,j)|\leq M_{0} Q_{0}^{\frac{2|i-j|}{l_n}} ~,~~ 1\leq i,j\leq m
$$
 with probability tending to $1$.
Using this fact and similar arguments as for the derivation of 
 \eqref{H2ln2}, we obtain $|H_{3,l_n}|=O_{\mathbb{P}}((l_n^{1/2}\log^{1/2} n)r_n)=O_{\mathbb{P}}(r_n^\circ)$.
This proves \eqref{boundamplus1}  and completes the proof of part  (a).

\noindent
For a proof of part  (b), we recall the  definition of the filter $G$ in \eqref{LocAR}  and obtain
\begin{align*}
	G(\tfrac{m+1}{n},\FF_{m+1})=\sum_{s=1}^pa_s(\tfrac{m+1}{n})G(\tfrac{m+1}{n},\FF_{m+1-s})+\sum_{s=p+1}^{m}d_sG(\tfrac{s}{n},\FF_{m+1-s})+ {\sigma(\tfrac{m+1}{n})}\varepsilon_{m+1},
	\end{align*}
	where $d_s=0$, for $ p+1\leq s\leq m$.
	Observe that
	\begin{eqnarray}  \nonumber
	\E\big(G(\tfrac{m+1}{n},\FF_{m+1})G(\tfrac{j}{n},\FF_j)\big)&=&\sum_{s=1}^pa_s(\tfrac{m+1}{n})\E\big(G(\tfrac{m+1}{n},\FF_{m+1-s})G(\tfrac{j}{n},\FF_j)\big)\\
&+&\label{New-Cov-1}
\sum_{s=p+1}^{m}d_s\E\big(G(\tfrac{s}{n},\FF_{m+1-s})G(\tfrac{j}{n},\FF_j)\big),
\end{eqnarray}
($1\leq j\leq m-p$), and 
	\begin{eqnarray}
 \nonumber \E(G\big(\tfrac{m+1}{n},\FF_{m+1})G(\tfrac{m+1}{n},\FF_j)\big)&=&\sum_{s=1}^pa_s(\tfrac{m+1}{n})\E\big(G(\tfrac{m+1}{n},\FF_{m+1-s})G(\tfrac{m+1}{n},\FF_j)\big) \\ \nonumber
&+& \sum_{s=p+1}^{m}d_s\E\big(G(\tfrac{s}{n},\FF_{m+1-s})G(\tfrac{m+1}{n},\FF_j)\big) \label{New-Cov-2} 
	\end{eqnarray}
	($ m-p+1\leq j\leq m$).
    Define
    \begin{align*}
    \tilde \sigma_{i,j} & =\E(G(\tfrac{i}{n},\FF_i)G(\tfrac{j}{n},\FF_j))~,~~1\leq i\leq m-p~,~~1\leq j\leq m-p ,\\
     \tilde \sigma_{i,j}& =\E(G(\tfrac{i}{n},\FF_i)G(\tfrac{m+1}{n},\FF_j),~~
     1\leq i\leq  m-p~,~~m-p+1\leq j\leq {m+1},  \\
     \tilde \sigma_{i,j} & =\E(G(\tfrac{m+1}{n},\FF_i)G(\tfrac{j}{n},\FF_j))~,~~
     m-p+1\leq i\leq {m+1}~,~~  1\leq j\leq  m-p, \\
     \tilde \sigma_{i,j} &=\E(G(\tfrac{m+1}{n},\FF_i)G(\tfrac{m+1}{n},\FF_j))~,~~ m-p+1\leq i\leq { m+1}~,~~m-p+1\leq j\leq {m+1},
     \end{align*}
(note that $\tilde \sigma_{i,j}=\tilde \sigma_{j,i}$).
	These notations  and the equations  \eqref{New-Cov-1} and \eqref{New-Cov-2} show that  the $m$-dimensional
	 vector $\tilde {\bf a}_m=\big(0,..,0,a_p(\frac{m+1}{n}),a_{p-1}(\frac{m+1}{n}),...,a_1(\frac{m+1}{n})\big)^\top$ satisfies
	\begin{align*}
	\Sigma_m^{AR}\tilde{\bf a}_m=\bs \gamma_m^{AR},
	\end{align*}
	where the  $m\times m$ matrix  $\Sigma_m^{AR}$   and the $m$-dimensional vector $\bs \gamma_m^{AR}$ are defined by
	$\Sigma_m^{AR}=(\tilde \sigma_{i,j})_{1\leq i,j\leq m}$  and  $\gamma_m^{AR}=(\tilde \sigma_{m+1,1},...,\tilde \sigma_{m+1,m})^\top$, respectively.  On the other hand
	we have
	\begin{align*}
	\hat X_{m+1,n}^{\rm pred} &=\hat \mu^{1:m}(\tfrac{m}{n})+\sum_{s=1}^m \hat a_{m+1-s,n}(\mu(\tfrac{s}{n})-\hat \mu^{1:m}(\tfrac{s}{n}))+(\mathbf {\hat {a}}_m^*)^\top\mathbf{Z},\\
	~X_{m+1,n}&=\mu(\tfrac{m+1}{n})+\tilde{\mathbf{a}}_m^\top\tilde{\mathbf Z}+ {\sigma(\tfrac{m+1}{n})\varepsilon_{m+1}}.
	\end{align*}
	where  the $m$-dimensional vectors  ${ \mathbf{Z}}$ and $\tilde{ \mathbf{Z}}$ are given by
	 \begin{align*}
	 { \mathbf{Z}} &=\Big(G(\tfrac{1}{n},\FF_1),
	 ,...,
	 ,G(\tfrac{m+1-p}{n},\FF_{m+1-p}),G(\tfrac{m+2-p}{n},\FF_{m+2-p}),... , G(\tfrac{m}{n},\FF_m)\Big)^\top\\
		\tilde{\mathbf{Z}} & =\Big(G(\tfrac{1}{n},\FF_1),
		,...,
		G(\tfrac{m-p}{n},\FF_{m-p})
		,G(\tfrac{m+1}{n},\FF_{m+1-p}),G(\tfrac{m+1}{n},\FF_{m+2-p}),...G(\tfrac{m+1}{n},\FF_m)\Big)^\top.
\end{align*}	
(note that the first $m-p$ elements of the two vectors coincide). 
	Therefore we obtain  the following decomposition
	\begin{align*}
	&\hat X^{\rm Pred}_{m+1,n}-X_{m+1,n}=W_1+W_2- {\sigma(\tfrac{m+1}{n})}\varepsilon_{m+1},
		\end{align*}
		where
	\begin{align*}		
 W_1 & =\hat \mu^{1:m}(\tfrac{m}{n})-\mu(\tfrac{m+1}{n})+\sum_{s=1}^m \hat a_{m+1-s,n}\big(\mu(\tfrac{s}{n})-\hat \mu^{1:m}(\tfrac{s}{n})\big),\\
	W_2 & =(\mathbf {\hat {a}}_m^*)^\top\mathbf{Z}-\tilde{\mathbf{a}}_m^\top\tilde{\mathbf Z}:=W_{2,1}+W_{2,2},
	\\
	W_{2,1}& =(\mathbf {\hat {a}}_m^*)^\top(\mathbf Z-\tilde{\mathbf Z}), W_{2,2}=((\mathbf {\hat {a}}_m^*)^\top-\tilde{\mathbf a}_m^\top)\tilde {\mathbf Z}.
	\end{align*}
	It now  follows from  the proof of \eqref{boundamplus1} that $W_1=O_{\mathbb{P}}(r_n^\circ)$. To derive a similar estimate for  the term $W_{2,1}$ we
	 note that by \eqref{boundastarnorm} and \eqref{bound-astar}
	\begin{align*} 
	|\mathbf {\hat {a}_m}^*|=O_{\mathbb{P}}(1).
	\end{align*}
	Straightforward but tedious calculations  using condition (L2) yield that
	\begin{align*} 
	|\mathbf Z-\tilde{\mathbf Z}|=O_{\mathbb{P}}\Big(\frac{p^{3/2}}{n}\Big),
	\end{align*}
which leads to $W_{2,1}=O_{\mathbb{P}}(r_n)$. For estimation of   $W_{2,2}$,
note that a maximal inequality  shows
	\begin{align}\label{hol3}
| \mathbf{\tilde Z} |_{\infty}  = \max_{i=1}^m | \tilde  Z_i| 
=O_{\mathbb{P}}(n^{\frac{1}{q}})~.
	\end{align}
We will show below  that
	\begin{align}\label{tilde_a}
	|\tilde {\bf a}_m-\mathbf a_m^*|=O(\frac{p}{n}).
	\end{align}
which yields  with \eqref{bound-astar}
the estimate $|\mathbf {\hat {a}}_m^*-\tilde{\mathbf a}_m|=O_{\mathbb{P}}(r_n)$. Observing   \eqref{hol3} we have
	 $W_{2,2}=O_{\mathbb{P}}(n^{\frac{1}{q}}r_n)$, which completes the proof of part (b),  observing  the fact that $\varepsilon_{m+1}$
	 is identically distributed with $\varepsilon_{1}
$.	

\noindent
In order to show \eqref{tilde_a} we use   conditions (P2), (P3), will prove that
	\begin{align}\label{5.79}
	&\sup_{1\leq i,j\leq m+1}|\tilde \sigma_{i,j}-\sigma_{i,j}|=O\Big(\frac{2m+2- h_{m}(i) - h_{m}(j) }{n}\chi^{|i-j|}\Big),
		\end{align}
where
\begin{align*}
h_{m}(u) =(m+1)\mathbf{1}(1\leq u\leq m-p)+u\mathbf 1(m-p+1\leq u\leq m+1)~.
	\end{align*}
	To see this, we consider exemplarily the case that $i,j\in[m-p+1,m+1]$ - all other cases are treated in the same way.  Then
	\begin{align}
	\tilde \sigma_{i,j}-\sigma_{i,j}&=\E \big(G(\tfrac{m+1}{n},\FF_i)G(\tfrac{m+1}{n},\FF_j)\big)-\E \big(G(\tfrac{i}{n},\FF_i)G(\tfrac{j}{n},\FF_j)\big) = K_1+K_2, \label{hol4}
	\end{align}	
where $K_1$ and $K_2$ are defined by
		\begin{align}
	K_{1} &=\E\Big(G(\tfrac{m+1}{n},\FF_i)\Big(G(\tfrac{m+1}{n},\FF_j)-G(\tfrac{j}{n},\FF_j)\Big)\Big) 	\nonumber \\
	K_{2} &= \E\Big(G(\tfrac{j}{n},\FF_i)\Big(G(\tfrac{m+1}{n},\FF_i)-G(\tfrac{i}{n},\FF_i)\Big)\Big)  \nonumber
	\end{align}
 For  the investigation of $K_{1}$, we use
	the differentiability of the filter to obtain  
	\begin{align}\label{5.81}
	\Big|\E\Big(G(\tfrac{m+1}{n},\FF_i)\big(G(\tfrac{m+1}{n},\FF_j)-G(\tfrac{j}{n},\FF_j)\big)\Big)\Big|&
	&\leq \int_{\tfrac{j}{n}}^{\tfrac{m+1}{n}}\Big|\E\Big(G(\tfrac{m+1}{n},\FF_i)\dot G(u,\FF_j)\Big)\Big|du
	\end{align}
Observing assumption (P2), (P3) and by the argument of proving \eqref{small-cov},  it follows
	\begin{align}\label{5.82}
	\big|\E\big(G(\frac{m+1}{n},\FF_i)\dot G(u,\FF_j)\big)\big|=O(\chi^{|i-j|})
	\end{align}
	(uniformly with respect  to $u\in [0,1]$).
Combining the estimates  \eqref{5.81} and \eqref{5.82} yields
	\begin{align*}
	|K_{1}|=O\Big(\frac{m+1-j}{n}\chi^{|i-j|}\Big).
	\end{align*}
	Similarly it follows that
	$|K_{2}|=O(\frac{m+1-i}{n}\chi^{|i-j|})$.
	These  bounds and \eqref{hol4}  yield
	\begin{align*}
		\sup_{m-p\leq i,j\leq m+1}|\tilde \sigma_{i,j}-\sigma_{i,j}|=O\Big(\frac{2m+2-i-j}{n}\chi^{|i-j|}\Big),
		\end{align*}
		which shows that \eqref{5.79} holds uniformly for $i,j\in [m-p,m+1]$. Similar and simpler arguments yield that \eqref{5.79} holds uniformly for the  other choices of $i,j$.

		Next, observe that $\Sigma_m^{AR}$ is an $m\times m$ symmetric matrix, and so is $\Sigma_m^{AR}-\Sigma_{n,m}$. By similar arguments as given in the proof of
		 Theorem \ref{thm2}, it follows that
		\begin{align*}
	\rho(\Sigma_m^{AR}-\Sigma_{n,m})\leq \max_{1\leq i\leq n} \sum_{j=1}^n|\sigma_{i,j}-\tilde \sigma_{i,j}|=O(\frac{p}{n}),
		\end{align*}
		where the last inequality is a consequence from  \eqref{5.79}. This inequality and  assumption
		(E1) imply that $\Sigma_m^{AR}$ is positive definite if  $n$ is sufficiently large. Consequently,
			\begin{align*}
		\tilde{\bf a}_m=(\Sigma_m^{AR})^{-1}\bs \gamma_m^{AR}.
			\end{align*}
			and by similar arguments as  given in the  proof of Corollary \ref{Corol2} we obtain that
			\begin{align}\label{Final_1}
		\rho((\Sigma_m^{AR})^{-1}-\Sigma^{-1}_{n,m}) & =O(\frac{p}{n}) , \\
		\label{Final_2}|\bs \gamma_m^{AR}-\bs \gamma_m|
		&=O(\frac{p}{n})
		\end{align}
		Now \eqref{tilde_a} follows from \eqref{star_a} \eqref{Final_1}, \eqref{Final_2}, which completes the proof.\hfill $\Box$

\bigskip\bigskip

\noindent {\bf Acknowledgements}
This work has been supported in part by the
Collaborative Research Center ``Statistical modeling of nonlinear
dynamic processes'' (SFB 823, Project A1, C1) of the German Research Foundation (DFG) and  NSFC Young program (No.11901337).

{
\setlength{\bibsep}{2pt}

	\bibliography{lit}
}

\end{document}